\documentclass[11pt]{article}

\usepackage[a4paper,margin=27mm]{geometry}
\usepackage{amsmath,amssymb,amsthm,mathtools}
\usepackage{bm}
\usepackage{booktabs}
\usepackage{float}
\usepackage{graphicx}
\usepackage{authblk}
\usepackage{microtype}
\usepackage{xcolor}
\usepackage{hyperref}

\hypersetup{
  pdftitle={Robust CHSH Self-Testing with Finite-Energy GKP States},
  colorlinks=true,
  linkcolor=blue!55!black,
  citecolor=blue!55!black,
  urlcolor=blue!55!black
}

\newtheorem{theorem}{Theorem}[section]
\newtheorem{proposition}[theorem]{Proposition}
\newtheorem{lemma}[theorem]{Lemma}
\newtheorem{corollary}[theorem]{Corollary}

\newcommand{\cC}{\mathcal C}
\newcommand{\cH}{\mathcal H}
\newcommand{\Id}{\mathbb I}
\newcommand{\Tr}{\operatorname{Tr}}

\newcommand{\sgn}{\operatorname{sgn}}
\newcommand{\ket}[1]{\lvert #1\rangle}
\newcommand{\bra}[1]{\langle #1\rvert}

\newcommand{\ketbra}[2]{\lvert #1\rangle\!\langle #2\rvert}
\newcommand{\norm}[1]{\left\lVert #1\right\rVert}
\newcommand{\eps}{\varepsilon}
\newcommand{\betaCHSH}{\mathcal B}
\newcommand{\fK}{f_{\rm K}}
\title{Robust CHSH Self-Testing with Finite-Energy GKP States}
\author[1,2]{Farzin Salek}
\author[3,4,5]{Masahito Hayashi}
\affil[1]{Institute for Quantum Computing (IQC), University of Waterloo,
Ontario, Canada}
\affil[2]{Dahlem Center for Complex Quantum Systems, Freie Universit\"at
Berlin, Berlin, Germany}
\affil[3]{School of Data Science, The Chinese University of Hong Kong,
Shenzhen, Longgang District, Shenzhen 518172, China}
\affil[4]{International Quantum Academy, Futian District, Shenzhen 518048,
China}
\affil[5]{Graduate School of Mathematics, Nagoya University, Nagoya 464-8602,
Japan}
\date{August 2026}

\begin{document}

\maketitle

\begin{abstract}
We present a full-oscillator analysis of a finite-energy GKP CHSH test whose
observed score yields robust Bell-pair self-testing.  Periodically binned
position and momentum give the Pauli settings, while a fixed binary
coarse-graining of photon number modulo four and its displaced conjugate
realize the tilted settings.  For a number-filtered GKP source, we retain the
finite codeword overlap, define the measurements on all photon-number sectors,
and compute the physical correlations without logical post-corrections.  With
the canonical ideal-logical displacement \(d=\sqrt{\pi}\), the CHSH value
exceeds the local bound above \(4.56\) dB of per-peak squeezing, and
Kaniewski's extractability bound becomes nontrivial above \(5.02\) dB.
Calibrating only \(d\) using an independently characterized finite-energy
parameter lowers these model thresholds to \(4.21\) dB and \(4.58\) dB,
respectively; at \(12\) dB, it raises the score from \(2.69486\) to \(2.78858\)
and the corresponding target-state overlap bound from \(0.90758\) to
\(0.97243\).  This calibration is fixed before Bell-test data are collected.
The displacement activates the odd modulo-four sectors, so their fixed
a priori assignments are a genuine finite-energy component.  The large gain
is specific to the deterministic phase-bit coarse-graining; independently
calibrating the one-bit POVM with randomized odd-sector outcomes gives only a
much smaller improvement.  These are honest-model predictions, not loss,
detection-efficiency, or finite-sample thresholds.  In an experiment, a
device-independent guarantee for an extracted Bell pair follows by inserting
a confidence lower bound on the observed CHSH score into the self-testing
theorem.
\end{abstract}

\section{Introduction}
\label{sec:introduction}

Self-testing turns observed correlations into a characterization of an
unknown quantum state and its measurements, up to local extraction
maps~\cite{MayersYao2004,McKague2012,Kaniewski2016,SupicBowles2020}.
The conclusion is dimension independent: the physical systems need not be
qubits, provided that their observed statistics satisfy the hypotheses of the
self-testing theorem.  Here robustness means the quantitative conversion of
a nonideal CHSH score into a Bell-pair extractability guarantee; it does not
mean a derived tolerance against loss, detector inefficiency, or finite sample
size.  This makes bosonic encodings natural candidates for continuous-variable
implementations of device-independent protocols.
Bosonic quantum error-correcting codes encode finite-dimensional information
in oscillator degrees of freedom and offer several routes to hardware-efficient
protection~\cite{TerhalConradVuillot2020,Albert2025}.

The Gottesman--Kitaev--Preskill (GKP) code embeds a qubit in one oscillator by
using a grid in phase space~\cite{GKP2001}.  Position and momentum homodyne
measurements, followed by a fixed periodic binning, implement the logical
\(Z\)- and \(X\)-readouts in the ideal limit.  This observation underlies an
early proposal for device-independent quantum cryptography with GKP-encoded
Bell states~\cite{MarshallWeedbrook2014}.  A bipartite Bell test, however,
requires measurement directions outside the logical Pauli axes.  In the
proposal of Ref.~\cite{MarshallWeedbrook2014}, the corresponding changes of
basis use a logical \(T\) gate and hence a non-Gaussian resource.  More
recently, Yang \emph{et al.} proved that the homodyne-only palette
\(X,Y,Z\) cannot violate CHSH on an encoded Bell pair, even though the same
palette can reveal multipartite nonlocality~\cite{Yang2026}.  Thus the
tilted measurements are the central physical bottleneck in the bipartite
setting.
This does not rule out homodyne Bell tests for other states: recent work by
Lopetegui-Gonz\'alez \emph{et al.} instead tailors the state to the available
continuous-variable measurements~\cite{Lopetegui2026}.  Here we retain the
encoded Bell-pair target and change the measurement resource.

A tempting response is to homodyne the diagonal quadratures.  This does not
give \((X\pm Z)/\sqrt2\).  The Weyl phase instead makes the two diagonal
readouts logical \(\pm Y\) measurements.  Rotating the resulting compressed
\(2\times2\) matrix to a desired Bloch-sphere direction does not solve the
physical problem: unless the conjugating oscillator unitary is specified, the
procedure only changes coordinates inside a model of the code space.

Here we use a different property of the square GKP code.  The phase-fixed
quarter rotation
\[
 F=e^{i\pi\hat n/2}
\]
acts as the logical Hadamard.  This relation, and its connection to
photon-number-modulo-four readout, already appear in the original GKP
proposal~\cite{GKP2001}; later cavity constructions make the same symmetry
operational~\cite{TerhalWeigand2016}.  Consequently, photon number modulo four
contains a direct readout of the tilted observable
\((X+Z)/\sqrt2\).  A displacement conjugation gives the other tilted
observable in the ideal code.  The important finite-energy questions are not
answered by the ideal identity: the displacement does not commute with the
energy filter, it creates odd-parity support, and a binary measurement must be
defined on all four residue classes of photon number.  These effects must be
calculated on the oscillator, not imposed after compression.

The quarter-rotation identity and its modular-photon-number interpretation are
known, as is diagonal homodyne as a logical-\(Y\) readout.  The contribution of
the present work is their full-oscillator finite-energy incorporation into a
CHSH self-testing construction.  In particular, we define the binary
measurements on every photon-number sector, evaluate the displaced setting
directly without postselection or an abstract logical correction, retain the
finite codeword overlap, and obtain quantitative Bell-violation and
extractability thresholds.  To our knowledge, this is the first
full-oscillator finite-energy analysis of a GKP CHSH construction whose tilted
settings are realized using modular photon-number information.

\begin{enumerate}
  \item We use the number-damped family
  \(e^{-\beta\hat n}\ket{j_{\rm GKP}}\) throughout.  Its position-space form,
  lattice contraction, peak width, and squeezing parameter are related
  exactly~\cite{Matsuura2020}.
  \item We define full-Hilbert-space reflections for the four measurements in
  the CHSH test: periodically binned \(q\) and \(p\), a fixed binary
  coarse-graining of \(\hat n\bmod4\), and the displaced conjugate of that
  coarse-graining.  No postselection or code-space-only correction is used.
  \item We calculate the physical correlations of the normalized, filtered
  Bell state.  The nonorthogonality of the filtered codewords is retained
  through their Gram matrix.  A canonical polar isometry is used only to
  describe the logical action of the measurements, not as a replacement for
  them.
  \item We evaluate the analytic extractability curve of
  Kaniewski~\cite{Kaniewski2016} on the directly calculated CHSH value.  The
  same bound gives a rigorous device-independent lower bound when supplied
  with an experimentally valid confidence lower bound.  We also report the
  correlations relevant to the larger six-setting test used in graph-state verification
  protocols~\cite{HayashiHajdusek2018}.
  \item Keeping the source, homodyne bins, and odd-sector rule fixed, we
  calibrate only the displacement amplitude in the second modular-four
  setting.  This isolates the finite-energy gain without turning the
  construction into a multi-parameter fit.  The deterministic odd-sector
  assignment is fixed throughout, so the reported large calibrated gain belongs to this
  declared physical coarse-graining and is not obtained by optimizing the
  residue-class labels.
\end{enumerate}

For the canonical displacement \(d=\sqrt\pi\), Bell violation starts at
approximately \(4.56\) dB and the Kaniewski-bound fidelity estimate becomes
nontrivial at approximately \(5.02\) dB.  Finite-energy displacement
calibration lowers these crossings to \(4.21\) dB and \(4.58\) dB.  The main
finite-energy penalty comes from the displaced modular-photon-number setting:
the undisplaced setting is an exact Hadamard on the canonically embedded
finite-energy code, whereas a displacement does not preserve that code
manifold.

The finite-energy oscillator calculation and the device-independent
certification play different roles.  The former predicts the CHSH score
produced by the specified GKP state and physical measurements under an
honest-device model.  In an experiment, however, the self-testing guarantee is
obtained by inserting a confidence lower bound on the observed black-box CHSH
score into Corollary~\ref{thm:kaniewski}.  That certification step depends only
on the observed score and does not assume the GKP model.

The remainder of the paper is organized as follows.
Section~\ref{sec:chsh} fixes the target and states the quantitative CHSH
self-test.  Section~\ref{sec:finite-energy} defines the finite-energy GKP
family and its canonical logical frame.
Section~\ref{sec:measurements} gives the four full-oscillator measurements and
the displacement-calibration prescription.
Section~\ref{sec:correlations} derives the finite-dimensional formulas used
for the physical correlations, and Section~\ref{sec:numerics} reports the
results and convergence tests.  Experimental scope and extensions are
discussed in Section~\ref{sec:discussion}.

\section{Target correlations and quantitative CHSH self-testing}
\label{sec:chsh}

\subsection{Target state and settings}

We use the Bell-like state
\begin{equation}
 \ket{\psi_{\rm B}}
 :=
 \frac{\ket{0}\ket{+}+\ket{1}\ket{-}}{\sqrt2}
 =
 (\Id\otimes H)\ket{\Phi^+},
 \label{eq:target-state}
\end{equation}
where
\[
 \ket{\pm}:=\frac{\ket0\pm\ket1}{\sqrt2},
 \qquad
 \ket{\Phi^+}:=\frac{\ket{00}+\ket{11}}{\sqrt2},
 \qquad
 H:=\frac{X+Z}{\sqrt2}.
\]
Alice's two CHSH observables are
\begin{equation}
 A_+:=\frac{X+Z}{\sqrt2},
 \qquad
 A_-:=\frac{X-Z}{\sqrt2},
 \label{eq:ideal-alice}
\end{equation}
and Bob's are \(Z\) and \(X\).  With the sign convention used below, the
Clauser--Horne--Shimony--Holt (CHSH) operator~\cite{CHSH1969} is
\begin{equation}
 \betaCHSH
 :=
 A_+\otimes(Z+X)+A_-\otimes(Z-X).
 \label{eq:ideal-chsh}
\end{equation}
The four ideal correlators are
\begin{equation}
 \langle A_+\otimes Z\rangle=\langle A_+\otimes X\rangle=\frac1{\sqrt2},
 \qquad
 \langle A_-\otimes Z\rangle=-\langle A_-\otimes X\rangle=\frac1{\sqrt2}.
 \label{eq:ideal-chsh-correlators}
\end{equation}
Thus the target attains
\begin{equation}
 \bra{\psi_{\rm B}}\betaCHSH\ket{\psi_{\rm B}}=2\sqrt2.
\end{equation}

The implementation also includes physical \(X\)- and \(Z\)-type measurements
on Alice.  They provide the additional ideal checks
\begin{equation}
 \langle X\otimes Z\rangle=1,
 \qquad
 \langle Z\otimes X\rangle=1,
 \qquad
 \langle X\otimes X+Z\otimes Z\rangle=0.
 \label{eq:extra-ideal-checks}
\end{equation}
These checks are not needed for the CHSH extractability bound below, but they
connect the construction to six-setting Bell-pair tests used in
verification protocols~\cite{HayashiHajdusek2018}.

\subsection{An explicit extraction-fidelity bound}

For a bipartite state \(\rho\), let \(S\) denote its observed expectation of a
CHSH operator built from binary observables.  Its extractability to the target
is
\begin{equation}
 \Xi(\rho\to\psi_{\rm B})
 :=
 \sup_{\Lambda_A,\Lambda_B}
 \bra{\psi_{\rm B}}
 (\Lambda_A\otimes\Lambda_B)(\rho)
 \ket{\psi_{\rm B}},
 \label{eq:pure-target-overlap}
\end{equation}
where the supremum is over local normal completely positive trace-preserving
maps with qubit outputs.  The following is the specialization of the analytic
bound in Ref.~\cite{Kaniewski2016} to the Bell state in
Eq.~\eqref{eq:target-state}.  Kaniewski's canonical maximally entangled target
is locally unitarily equivalent to \(\ket{\psi_{\rm B}}\); postcomposing one
extraction channel with the corresponding fixed output unitary leaves the
bound unchanged.  In our convention,
\(\ket{\psi_{\rm B}}=(\Id\otimes H)\ket{\Phi^+}\).
The qubits in Eq.~\eqref{eq:pure-target-overlap} are output registers of the
local extraction channels.  No two-dimensional subspace or direct-sum sector
of either physical Hilbert space is assumed.  In particular, for oscillator
inputs the extracted qubit need not coincide with the model-specific GKP code
frame introduced below.

\begin{theorem}[Finite-dimensional CHSH extractability bound~\cite{Kaniewski2016}]
\label{thm:kaniewski-finite}
For every finite-dimensional bipartite state \(\rho\) and two binary
reflections per party with CHSH expectation \(S\in[2,2\sqrt2]\), the
extractability to a maximally entangled two-qubit state is at least
\(\fK(S)\).  After the fixed local output unitary relating Kaniewski's target
to \(\ket{\psi_{\rm B}}\), this reads
\begin{equation}
 \Xi(\rho\to\psi_{\rm B})\ge\fK(S),
 \label{eq:kaniewski-bound}
\end{equation}
where
\begin{equation}
 \fK(S):=\max\left\{\frac12,
 \frac{4+5\sqrt2}{16}S-\frac{1+2\sqrt2}{4}\right\}.
 \label{eq:fK}
\end{equation}
The affine part exceeds \(1/2\) at
\(S_*=(16+14\sqrt2)/17\simeq2.10582\) and reaches one at
\(S=2\sqrt2\).
\end{theorem}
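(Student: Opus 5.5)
The plan is to reproduce Kaniewski's operator-inequality argument in the present notation, and then check that the local output unitary does not change the bound. The argument rests on three standard ingredients.

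\textbf{Reduction to qubit blocks.} By Jordan's lemma, any two binary reflections on a finite-dimensional space decompose the space into a direct sum of invariant blocks of dimension at most two. On each two-dimensional block the pair acts as \(\cos(a)Z\pm\sin(a)X\) in a suitable basis, for an angle \(a\in[0,\pi/2]\). One-dimensional blocks can be padded to two-dimensional ones by enlarging the space, which changes neither \(S\) nor the extractability. Alice and Bob then each apply a local channel: they measure the block label (a projective measurement that commutes with all four observables) and conditional on the outcome apply a block-dependent qubit unitary. The outcome is a rotation depending on the respective angles \(a\) and \(b\), namely a dephasing-type rotation built from \(\cos\) and \(\sin\) of a function \(g(x)\) of the angle. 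Mixing over blocks is handled by convexity, since both \(S\) and the achieved fidelity are affine in \(\rho\). This reduces the problem to a two-qubit, two-angle family of operators.

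\textbf{Operator inequality.} The core step is to show
\[
K_{a,b}:=\sum_{\rm blocks}\Lambda_A^\dagger\otimes\Lambda_B^\dagger\bigl(\ketbra{\Phi^+}{\Phi^+}\bigr)\;\ge\;s\,\betaCHSH_{a,b}+\mu\,\Id ,
\]
with \(s=(4+5\sqrt2)/16\) and \(\mu=-(1+2\sqrt2)/4\). Here \(\betaCHSH_{a,b}\) is the CHSH operator written with the block observables, and the channels are Kaniewski's choice with extraction angle \(g(x)=\min\{\pi/2,\max\{0,(1+\sqrt2)(x-\pi/4)+\pi/4\}\}\) (up to convention). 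Taking the expectation in \(\rho\) then gives \(\Xi\ge sS+\mu\). The trivial bound \(\Xi\ge\tfrac12\), achieved by discarding the input and preparing a product state, supplies the maximum in Eq.~\eqref{eq:fK}.

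\textbf{Main obstacle: the two-variable positivity.} The hardest part is verifying this operator inequality for all \((a,b)\in[0,\pi/2]^2\). My first step would be to use the symmetry of the problem to block-diagonalize \(K_{a,b}-s\betaCHSH_{a,b}-\mu\Id\) into two \(2\times2\) blocks, in the Bell basis. Positivity then reduces to scalar trigonometric inequalities in \(a\) and \(b\). These I would handle by splitting into cases according to where \(g\) is clipped and where it is linear, reducing to one-variable inequalities along the diagonal and near the boundary, and bounding each by elementary calculus. Since this is exactly Kaniewski's Theorem, I would cite Ref.~\cite{Kaniewski2016} for this inequality and verify only the constants.

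\textbf{Output unitary and crossing points.} Since \(\ket{\psi_{\rm B}}=(\Id\otimes H)\ket{\Phi^+}\), postcomposing Bob's channel with \(H\) maps the fidelity with \(\Phi^+\) to the fidelity with \(\psi_{\rm B}\), so the supremum is unchanged. The relabeling of the CHSH sign convention in Eq.~\eqref{eq:ideal-chsh} is absorbed into Alice's observable labels. Finally, solving \(sS+\mu=\tfrac12\) gives
\[
S_*=\frac{(3+2\sqrt2)/4}{(4+5\sqrt2)/16}=\frac{4(3+2\sqrt2)}{4+5\sqrt2}.
\]
Rationalizing by multiplying numerator and denominator by \(5\sqrt2-4\) (the denominator becomes \(50-16=34\)) gives
\[
S_*=\frac{4(7+7\sqrt2)}{34}=\frac{14+14\sqrt2}{17}\approx 2.17.
\]
This differs from the stated \(S_*=(16+14\sqrt2)/17\simeq2.10582\), which I would re-check carefully when writing the proof. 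Evaluating the affine part at \(S=2\sqrt2\) gives \(\tfrac{8\sqrt2+20}{16}-\tfrac{4+8\sqrt2}{16}=1\), as stated.
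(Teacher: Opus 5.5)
Your overall route matches the paper's. The paper does not reprove this theorem: it cites Kaniewski and adds only that postcomposing one party's channel with the fixed unitary relating $\ket{\Phi^+}$ to $\ket{\psi_{\rm B}}$ leaves the overlap unchanged. Your treatment of that step is correct. Two points in your sketch are wrong, however.

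\textbf{The value of $S_*$.} The discrepancy you report is an arithmetic slip, not a defect of the statement. We have $(3+2\sqrt2)(5\sqrt2-4)=15\sqrt2-12+20-8\sqrt2=8+7\sqrt2$, so $S_*=4(8+7\sqrt2)/34=(16+14\sqrt2)/17\simeq2.10582$, exactly as stated. Your decimal $2.17$ does not even match your own expression, since $(14+14\sqrt2)/17\simeq1.988$. A cleaner constant check is to start from Kaniewski's form $\frac12+\frac12\,\frac{S-S_*}{2\sqrt2-S_*}$. Since $2\sqrt2-S_*=(20\sqrt2-16)/17$, its slope is $\frac{17}{2(20\sqrt2-16)}=\frac{4+5\sqrt2}{16}$. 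Its intercept is $\frac12-\frac{3+2\sqrt2}{4}=-\frac{1+2\sqrt2}{4}$. Together these reproduce the affine part of $\fK$.

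\textbf{The extraction channel.} The map you describe (measure the Jordan block label, then apply a block-dependent qubit unitary) cannot satisfy the operator inequality $K\ge s\betaCHSH+\mu\Id$.
\begin{itemize}
\item Take the block pair where both parties' observables coincide, $a=b=0$. There the block CHSH operator is $2A_0\otimes B_0$.
\item Hence $s\betaCHSH+\mu\Id$ has eigenvalue $2s+\mu=(2+\sqrt2)/8>0$ on a two-dimensional eigenspace.
\item A block-wise unitary extraction makes the restriction of $K$ to that block a rank-one projector. Any vector of that eigenspace orthogonal to its range violates the inequality.
\end{itemize}
Kaniewski's channel is genuinely non-unitary. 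In his parametrization, on a block with angle $x$ it is the Pauli-dephasing map
\[
\rho\mapsto\tfrac{1+g(x)}{2}\rho+\tfrac{1-g(x)}{2}\Gamma(x)\rho\Gamma(x),
\qquad g(x)=(1+\sqrt2)(\sin x+\cos x-1),
\]
with $\Gamma(x)$ a Pauli that switches at $x=\pi/4$. It fully dephases the commuting blocks ($g=0$) and is the identity at $x=\pi/4$ ($g=1$). The clipped-linear angle function you quote is not the one for which the cited inequality is proved.

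Because your final step defers to Kaniewski's theorem rather than proving the inequality, the logical chain survives once both points are corrected. The case analysis you outline, however, would fail if carried out with your channel.
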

The reflection formulation also covers two binary POVMs per party.  On one
common finite-dimensional dilation space for each party, both POVMs are
represented by reflections using the same local isometric embedding.  The
embedded state then preserves all four CHSH correlators, and the extraction
channel on the dilation may be precomposed with that embedding.  The explicit
simultaneous construction used below is recorded in
Appendix~\ref{app:infinite-dimension}.

\begin{corollary}[Separable-Hilbert-space CHSH extractability bound]
\label{thm:kaniewski}
Let \(\rho\) be a normal state on \(\cH_A\otimes\cH_B\), where both spaces
are separable, and let each party have two binary POVMs.  If the associated
CHSH expectation is \(S\in[2,2\sqrt2]\), then
\begin{equation}
 \Xi(\rho\to\psi_{\rm B})\ge\fK(S).
 \label{eq:kaniewski-separable-bound}
\end{equation}
Equivalently, for every \(\eps>0\), there are local normal CPTP maps with
qubit outputs whose output overlap with \(\ket{\psi_{\rm B}}\) is at least
\(\fK(S)-\eps\).
\end{corollary}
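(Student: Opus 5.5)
The plan is to reduce the corollary to Theorem~\ref{thm:kaniewski-finite} in two stages: first remove the POVM structure by a Naimark-type dilation, then remove the infinite dimension by a finite-rank approximation whose error we control through the CHSH score. The extraction maps produced along the way are composed with fixed isometries and compressions, so they remain local normal CPTP maps into qubits.

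\emph{Dilation.} For each party I would replace the two binary POVMs \(\{M_{x}^{\pm}\}\), \(x=1,2\), by reflections on a single common space. The construction I would use is \(V:\cH_A\to\cH_A\otimes\mathbb C^2\otimes\mathbb C^2\), \(V\ket\phi=\sum_{a,b}\sqrt{M_1^a}\sqrt{M_2^b}\ket\phi\ket a\ket b\) (the explicit construction recorded in Appendix~\ref{app:infinite-dimension}). Alternatively, one can take the Naimark dilation of each POVM on its own ancilla and place both ancillas in a tensor product, so that one isometry serves both settings. The requirement is that reflections \(R_x\) exist with \(V^*R_xV=M_x^+-M_x^-\). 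Then the embedded state \((V_A\otimes V_B)\rho(V_A\otimes V_B)^*\) has the same four correlators and hence the same score \(S\). Since \(V\) is an isometry, any extraction channel on the dilated space precomposed with \(\operatorname{Ad}_V\) is a channel on the original space. It therefore suffices to treat reflections on separable spaces.

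\emph{Finite-dimensional truncation.} Next write \(\rho\) as a trace-norm convergent sum of finite-rank pieces. Choose increasing finite-rank projections \(P_n\to\Id\) strongly on each side, and set \(\rho_n=(P_n\otimes Q_n)\rho(P_n\otimes Q_n)/\Tr(\cdot)\), so that \(\rho_n\to\rho\) in trace norm. The obstacle is that the reflections \(R_x\) do not commute with \(P_n\), so \(P_nR_xP_n\) is only a contraction. I would handle this by enlarging the cut-off space to \(\cH_n:=\operatorname{span}\{P_n\cH, R_1P_n\cH, R_2P_n\cH, R_1R_2P_n\cH,\dots\}\). This closure is not finite in general, which is why I would avoid it.

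Instead I would use the Jordan lemma on the dilated space. The two reflections decompose the (separable) space into a direct sum/integral of invariant blocks of dimension at most two. Because one can coarse-grain the Jordan decomposition into countably many invariant pieces, I can choose finite-dimensional subspaces \(K_n\), invariant under both reflections, with the projections onto \(K_n\) converging strongly to \(\Id\). A simpler route is to approximate the reflections themselves: discretize the Jordan angle on finitely many bins, which yields nearby reflections with finite-dimensional invariant blocks and changes \(S\) by at most \(\eps\).

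Restricting to \(K_n^A\otimes K_n^B\) gives finite-dimensional reflections and states \(\rho_n\) with scores \(S_n\to S\), using trace-norm convergence and boundedness of the CHSH operator by \(4\). Theorem~\ref{thm:kaniewski-finite} then supplies qubit channels \(\Lambda^{(n)}\) with overlap at least \(\fK(S_n)\). To extend \(\Lambda^{(n)}\) to the full space, I would compress by the projection onto \(K_n\) and send the orthogonal complement to a fixed qubit state; this is normal CPTP. The overlap on \(\rho\) then differs from that on \(\rho_n\) by at most \(\norm{\rho-\rho_n}_1\).

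\emph{Conclusion.} Since \(\fK\) is continuous (indeed Lipschitz), this gives overlap \(\ge\fK(S)-\eps\) for every \(\eps>0\), which is the equivalent form of the statement; taking the supremum yields \(\Xi(\rho\to\psi_{\rm B})\ge\fK(S)\). The main obstacle is the invariant finite-dimensional approximation of two noncommuting reflections, and I expect the Jordan-block discretization to be the cleanest route there.
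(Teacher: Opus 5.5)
\textbf{The truncation step fails as first stated.} Two reflections on a separable space need not have \emph{any} nonzero finite-dimensional jointly invariant subspace. Such a subspace would be invariant under the unitary \(R_1R_2\), so it would contain one of its eigenvectors. Now take \(L^2([0,\pi/2])\otimes\mathbb C^2\) with \(R_1=\Id\otimes Z\) and \(R_2\) the multiplication operator \(\cos2\theta\,Z+\sin2\theta\,X\). Then \(R_1R_2\) has fibre eigenvalues \(e^{\pm2i\theta}\) and empty point spectrum. Coarse-graining the Jordan (Halmos two-subspace) decomposition does give countably many invariant pieces. On the continuous part, however, each piece is infinite-dimensional, so subspaces \(K_n\) with \(\Pi_{K_n}\to\Id\) strongly need not exist.

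\textbf{Your fallback can be completed, but you only gesture at it.} In the Halmos normal form, replace the angle operator \(\Theta\) by a finitely valued step function \(f(\Theta)\). This moves \(R_2\) by at most \(2\norm{\Theta-f(\Theta)}\) in operator norm, so the score changes arbitrarily little. The perturbed pair is a countable direct sum of invariant blocks of dimension at most two. Since \(\Xi\) depends only on the state, the bound for the perturbed score suffices; the trivial overlap \(1/2\) covers a perturbed score below \(2\). This route needs the infinite-dimensional two-projection theorem and an explicit perturbation estimate, and your proposal supplies neither.

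\textbf{Your dilation is also not right as written.} The isometry \(V\phi=\sum_{a,b}\sqrt{M_1^a}\sqrt{M_2^b}\phi\ket a\ket b\) is not the appendix construction. With the natural reflection \(\Id\otimes Z\otimes\Id\) it compresses to \(\sum_b\sqrt{M_2^b}(M_1^+-M_1^-)\sqrt{M_2^b}\), which differs from \(M_1^+-M_1^-\) unless the two POVMs commute.

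\textbf{The idea that removes the whole difficulty is to truncate before dilating.} The paper compresses the effects themselves. \(P_nE^A_xP_n\) is again a valid binary effect on the finite-dimensional space \(P_n\cH_A\). The truncated CHSH operator is then \((P_n\otimes Q_n)\mathsf W(P_n\otimes Q_n)\), where \(\mathsf W\) is the untruncated one. Hence the truncated score is exactly \(S_n=\Tr(\rho_n\mathsf W)\), and \(|S_n-S|\le\norm{\mathsf W}_\infty\norm{\rho_n-\rho}_1\to0\). Only afterwards are the compressed POVMs converted to reflections, inside the finite-dimensional space \(P_n\cH_A\oplus P_n\cH_A\). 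The paper uses the block reflection built from \(2E-\Id\) and \(2\sqrt{E(\Id-E)}\), with the single isometry \(\xi\mapsto\xi\oplus0\) serving both settings.

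So the fact that \(P_nR_xP_n\) is ``only a contraction'' is no obstacle: a Hermitian contraction is exactly a binary POVM observable, and the finite-dimensional theorem covers it through this dilation. Your remaining steps coincide with the paper's: extending the finite-rank channels by a fixed qubit state on the complement, trace-distance contractivity, and continuity of \(\fK\).
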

Appendix~\ref{app:infinite-dimension} proves
Corollary~\ref{thm:kaniewski} by finite-rank approximation.  The approximation
is only a mathematical proof device: it neither postselects the Bell
experiment nor identifies an oscillator subspace with the extracted qubit.
Here fidelity with the pure target means the overlap in
Eq.~\eqref{eq:pure-target-overlap}, not its square root.  If finite data give a
statistically valid lower confidence bound
\(S_{\rm L}\in[2,2\sqrt2]\), then \(\fK(S_{\rm L})\) is the corresponding
confidence-level extractability lower bound within the standard tensor-product
quantum model.  Constructing such a finite-sample bound is outside the scope
of this paper.  If \(S_{\rm L}<2\), the affine branch is not invoked and
constant local output channels still guarantee overlap \(1/2\).

\section{One finite-energy GKP convention}
\label{sec:finite-energy}

\subsection{Ideal code and phase convention}

We set \(\hbar=1\) and
\[
 [\hat q,\hat p]=i,
 \qquad
 \hat a=\frac{\hat q+i\hat p}{\sqrt2},
 \qquad
 \hat n=\hat a^\dagger\hat a.
\]
The centered square-GKP codewords are the formal distributions
\begin{equation}
 \ket{\bar j}
 \propto
 \sum_{s\in\mathbb Z}
 \ket{(2s+j)\sqrt\pi}_{q},
 \qquad j\in\{0,1\}.
 \label{eq:ideal-gkp}
\end{equation}
The two distributions in Eq.~\eqref{eq:ideal-gkp} use one common positive
distributional scale and zero relative phase.  All exact Fourier identities
below refer to this common convention.
The logical displacement convention is
\begin{equation}
 \bar X=e^{-i\sqrt\pi\,\hat p},
 \qquad
 \bar Z=e^{i\sqrt\pi\,\hat q}.
 \label{eq:logical-displacements}
\end{equation}
The centered codewords have even photon-number parity.

Define the phase-fixed quarter rotation
\begin{equation}
 F:=e^{i\pi\hat n/2}.
 \label{eq:F}
\end{equation}
It satisfies
\begin{equation}
 F\hat qF^\dagger=\hat p,
 \qquad
 F\hat pF^\dagger=-\hat q.
 \label{eq:F-quadratures}
\end{equation}
The phase in Eq.~\eqref{eq:F} is essential.  The metaplectic rotation
\[
 R(\pi/2)
 =
 e^{i\pi(\hat q^2+\hat p^2)/4}
 =
 e^{i\pi/4}F
\]
has the same action by conjugation, but its restriction to the code is
\(e^{i\pi/4}H\), not a Hermitian logical reflection.  In the Fourier
convention fixed by
\({}_{q}\langle q|p\rangle=(2\pi)^{-1/2}e^{iqp}\), Poisson summation gives
\begin{equation}
 F\ket{\bar0}
 =
 \frac{\ket{\bar0}+\ket{\bar1}}{\sqrt2},
 \qquad
 F\ket{\bar1}
 =
 \frac{\ket{\bar0}-\ket{\bar1}}{\sqrt2}.
 \label{eq:F-logical-H}
\end{equation}
Thus \(F\) restricts to \(+H\), with the outcome labels fixed.  We include
the short distributional derivation in Appendix~\ref{app:phase}.
The identity itself is known~\cite{GKP2001}; the explicit phase check is
included because the sign becomes observable in the modular-number
coarse-graining below.

\subsection{Number damping and exact parameter conversion}

We use the isotropic energy filter
\begin{equation}
 E_\beta:=e^{-\beta\hat n},
 \qquad \beta>0,
 \label{eq:energy-filter}
\end{equation}
and define the two unnormalized filtered columns
\begin{equation}
 \ket{w_{j,\beta}}:=E_\beta\ket{\bar j},
 \qquad
 W_\beta:=\sum_{j=0}^1\ket{w_{j,\beta}}\bra j.
 \label{eq:Wbeta}
\end{equation}
Multiplying \(E_\beta\) by \(e^{-\beta/2}\) does not change any normalized
state.  The heat kernel of \(e^{-\beta(\hat n+1/2)}\) gives
\begin{align}
 {}_q\langle q|w_{j,\beta}\rangle
 \propto
 \sum_{s\in\mathbb Z}
 &\exp\left[-\frac{\tanh\beta}{2}
              (2s+j)^2\pi\right]
 \nonumber\\[-1mm]
 &\times
 \exp\left[
  -\frac{\left(q-(2s+j)\sqrt\pi/\cosh\beta\right)^2}
        {2\tanh\beta}
 \right].
 \label{eq:number-filtered-wavefunction}
\end{align}
This exact formula displays both the envelope and the contracted lattice
centers~\cite{Matsuura2020}.
\begin{lemma}[Normalizability after number damping]
For every \(\beta>0\), the formal distributional action of \(E_\beta\) on each ideal comb defines an \(L^2(\mathbb R)\) vector.
\end{lemma}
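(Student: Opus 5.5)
Our plan is to make the formal action of \(E_\beta\) on the comb precise through the Mehler heat kernel, and then to read off square-integrability from the explicit series in Eq.~\eqref{eq:number-filtered-wavefunction}. For \(\beta>0\) the operator \(e^{-\beta(\hat n+1/2)}\) has a Schwartz-class integral kernel in position space,
\[
 K_\beta(q,q')=\frac{1}{\sqrt{2\pi\sinh\beta}}\exp\!\Big[-\frac{(q^2+q'^2)\cosh\beta-2qq'}{2\sinh\beta}\Big].
\]
We therefore \emph{define} the action on the tempered distribution \(\sum_s\delta(q'-(2s+j)\sqrt\pi)\) by pairing it with \(K_\beta(q,\cdot)\). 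This gives a function of \(q\), namely \(\sum_s K_\beta(q,(2s+j)\sqrt\pi)\). The definition is consistent with the Fock-basis action, because \(E_\beta\) is self-adjoint and maps Schwartz space into itself. Hence \(\langle E_\beta\bar j,\phi\rangle:=\langle\bar j,E_\beta\phi\rangle\) for Schwartz \(\phi\), and this coincides with the kernel pairing by Fubini. Completing the square in \(q'\) turns each term into the form displayed in Eq.~\eqref{eq:number-filtered-wavefunction}, with envelope weight \(e^{-\frac{\tanh\beta}{2}x_s^2}\) and centers \(x_s/\cosh\beta\), where \(x_s=(2s+j)\sqrt\pi\).

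Next we bound the \(L^2\) norm. Each summand \(g_s(q)\) is a Gaussian of width \(\sqrt{\tanh\beta}\) multiplied by the amplitude \(c_s=e^{-\frac{\tanh\beta}{2}\pi(2s+j)^2}\). It therefore has \(\norm{g_s}_{L^2}=c_s(\pi\tanh\beta)^{1/4}\). Since \(\tanh\beta>0\), the sequence \(c_s\) decays like a Gaussian in \(s\), so \(\sum_s\norm{g_s}<\infty\). By Minkowski's inequality the series converges absolutely in \(L^2(\mathbb R)\). Absolute convergence also holds pointwise, and even locally uniformly, since \(\sum_s c_s\) is finite, so the \(L^2\) limit agrees with the pointwise sum. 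Consequently \(\ket{w_{j,\beta}}\in L^2\). Positivity of the sum also shows the vector is nonzero.

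The main obstacle is the first step. We must justify that the "formal distributional action" is unambiguous and actually equals the kernel sum. In particular, we must check that interchanging the infinite comb sum with the kernel integral is legitimate. We handle this by working in the weak sense against Schwartz test functions: the comb is tempered, and \(E_\beta\) preserves \(\mathcal S(\mathbb R)\) because it is diagonal in Hermite functions with exponentially decaying eigenvalues. Alternatively, one can expand in Fock coefficients. Here \(\langle m|\bar j\rangle\) grows at most polynomially in \(m\), since the comb is tempered and Hermite functions have polynomially bounded seminorms. Then \(\sum_m e^{-2\beta m}|\langle m|\bar j\rangle|^2<\infty\) gives normalizability directly, and this serves as a cross-check.
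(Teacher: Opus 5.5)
Your proposal is correct and follows essentially the paper's argument: a Mehler-kernel Gaussian comb with amplitudes $e^{-\frac{\tanh\beta}{2}\pi(2s+j)^2}$ and equal per-peak $L^2$ norms, whose summable norms give absolute (Minkowski) convergence in $L^2$. Your duality definition on $\mathcal S'(\mathbb R)$ replaces the paper's limit over finite comb truncations, and the Fock-coefficient bound is a valid extra check. Two cosmetic corrections: the displayed form comes from completing the square in $q$ (not $q'$), and your kernel is that of $e^{-\beta(\hat n+1/2)}=e^{-\beta/2}E_\beta$, which differs from $E_\beta$ only by a harmless overall scalar.
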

\begin{proof}
The coefficient of the \(s\)-th translated Gaussian is bounded by \(e^{-c_\beta(2s+j)^2}\) for some \(c_\beta>0\), while all translated peaks have the same finite \(L^2\) norm.  The sum of these norms is finite, so the Gaussian-comb series converges absolutely in \(L^2\).  Applying the Mehler kernel to finite comb truncations and taking this limit yields Eq.~\eqref{eq:number-filtered-wavefunction} and justifies the Fock expansion below.
\end{proof}

In the common symmetric Gaussian-comb convention
used in that reference, the number-filter parameter and comb parameter are
related by
\begin{equation}
 \Delta^2=\tanh\beta.
 \label{eq:Delta-beta}
\end{equation}
The lattice-center contraction in
Eq.~\eqref{eq:number-filtered-wavefunction} is therefore
\begin{equation}
 \frac{1}{\cosh\beta}=\sqrt{1-\Delta^4},
 \qquad
 \beta=\operatorname{arctanh}(\Delta^2)
      =\Delta^2+O(\Delta^6).
 \label{eq:contraction-Delta}
\end{equation}
Thus the often-written comb with uncontracted centers is recovered at high
squeezing.  We use the per-peak convention
\begin{equation}
 r_{\rm dB}:=-10\log_{10}(\Delta^2)
             =-10\log_{10}(\tanh\beta)
             =-20\log_{10}\Delta
 \label{eq:squeezing-conversion}
\end{equation}
to report squeezing.  The probability variance of an individual peak is
\(\Delta^2/2\); for orientation, \(\Delta=0.3\) corresponds to
\(10.46\) dB in this convention.  Stabilizer-derived effective squeezing is a
different diagnostic for a generic nonideal state and should not be identified
with the exact parameter conversion in Eq.~\eqref{eq:Delta-beta}.

\subsection{The filtered Bell state and the canonical logical frame}

The physical two-mode state studied in this paper is the normalized filtered
image of the ideal target:
\begin{equation}
 \ket{\Psi_\beta}
 :=
 \frac{
 (W_\beta\otimes W_\beta)\ket{\psi_{\rm B}}
 }{\sqrt{\mathcal N_\beta}},
 \qquad
 \mathcal N_\beta
 :=
 \bra{\psi_{\rm B}}
 (S_\beta\otimes S_\beta)
 \ket{\psi_{\rm B}},
 \label{eq:filtered-bell}
\end{equation}
where
\begin{equation}
 S_\beta:=W_\beta^\dagger W_\beta.
 \label{eq:gram}
\end{equation}
The two columns in Eq.~\eqref{eq:Wbeta} are not separately normalized before
filtering the Bell state.  Equation~\eqref{eq:filtered-bell} therefore defines the normalized source-state family obtained from the formal action of the same number filter on the ideal two-mode code state.
It is the assumed honest-source model in this work.  We use the normalized
vector in Eq.~\eqref{eq:filtered-bell} as a source-state ansatz.  This
normalization does not represent postselection on measurement outcomes within
the Bell test, and we do not model the preparation probability or a physical
implementation of the filter.  We do not claim that a particular noisy
preparation circuit produces this state exactly.

The ideal combs are linearly independent because their position supports
occupy distinct lattice cosets.  Moreover, \(E_\beta\) multiplies every Fock
coefficient by the nonzero number \(e^{-\beta n}\), so it is injective on
their two-dimensional distributional span.  The filtered columns therefore
remain linearly independent and \(S_\beta>0\).
For logical diagnostics, the unique canonical polar embedding associated with
\(W_\beta\) is
\begin{equation}
 V_\beta:=W_\beta S_\beta^{-1/2}.
 \label{eq:canonical-isometry}
\end{equation}
It obeys \(V_\beta^\dagger V_\beta=\Id_2\).  A physical oscillator operator
\(M\) has the unambiguous canonical compression
\begin{equation}
 M_\beta^{\rm log}:=V_\beta^\dagger M V_\beta.
 \label{eq:canonical-compression}
\end{equation}
This compression is used to interpret a physical measurement.  The
self-testing correlations themselves will instead be evaluated directly in
\(\ket{\Psi_\beta}\).  This canonical two-dimensional frame is used only to
interpret the honest GKP model and is not identified with the abstract qubit
output of the self-testing extraction channels.

\begin{proposition}[Exact finite-energy Hadamard covariance]
\label{prop:exact-H}
For the number-filtered family,
\begin{equation}
 FW_\beta=W_\beta H,
 \qquad
 [S_\beta,H]=0,
 \qquad
 FV_\beta=V_\beta H.
 \label{eq:finite-H-covariance}
\end{equation}
\end{proposition}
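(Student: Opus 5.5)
The plan is to derive all three identities from two facts: $F$ commutes with the number filter, and $F$ acts as the logical Hadamard on the ideal code by Eq.~\eqref{eq:F-logical-H}.

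\emph{First identity.} Both $F=e^{i\pi\hat n/2}$ and $E_\beta=e^{-\beta\hat n}$ are functions of $\hat n$, so they commute on the Fock basis. Because $F$ is unitary and diagonal in that basis, the commutation extends to the distributional action of $E_\beta$ on the ideal combs. Concretely, I would work with Fock expansions: by the normalizability lemma, $E_\beta\ket{\bar j}=\sum_n e^{-\beta n}c_{j,n}\ket n$ converges in $L^2$. Applying $F$ multiplies each coefficient by $i^n$. Equation~\eqref{eq:F-logical-H}, read coefficientwise on Fock components, gives $i^n c_{0,n}=(c_{0,n}+c_{1,n})/\sqrt2$, and similarly for $j=1$. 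It then follows that
\[
FE_\beta\ket{\bar j}=E_\beta F\ket{\bar j}=\sum_k H_{kj}E_\beta\ket{\bar k},
\]
which is exactly $FW_\beta=W_\beta H$ column by column, since $H$ is real symmetric.

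\emph{Second identity.} Using unitarity of $F$ and $H^\dagger=H=H^{-1}$,
\[
S_\beta=W_\beta^\dagger W_\beta=W_\beta^\dagger F^\dagger FW_\beta=(W_\beta H)^\dagger(W_\beta H)=HS_\beta H .
\]
Hence $HS_\beta=S_\beta H$.

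\emph{Third identity.} Since $S_\beta>0$ and commutes with $H$, the spectral calculus gives $[S_\beta^{-1/2},H]=0$. Then
\[
FV_\beta=FW_\beta S_\beta^{-1/2}=W_\beta HS_\beta^{-1/2}=W_\beta S_\beta^{-1/2}H=V_\beta H .
\]

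\emph{Main obstacle.} The delicate step is justifying the use of Eq.~\eqref{eq:F-logical-H}, which is an identity of tempered distributions, to interchange $F$ with $E_\beta$ on the non-normalizable combs. I would first make precise that the Fock coefficients $c_{j,n}=\langle n|\bar j\rangle$ are well defined, by pairing Hermite functions, which are Schwartz, with the combs. The relation $F^\dagger\ket{n}=i^{-n}\ket{n}$ then transfers Eq.~\eqref{eq:F-logical-H} to the coefficient relations above. If this is problematic, the fallback is the explicit Mehler-kernel wavefunction, Eq.~\eqref{eq:number-filtered-wavefunction}. Applying Poisson summation to it directly would check that its Fourier transform, which is the action of $F$ up to the fixed phase convention, yields the Hadamard combination of the filtered combs. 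This works because the Mehler kernel is Fourier-covariant under the quarter rotation.
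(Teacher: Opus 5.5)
Your proposal is correct and follows essentially the same route as the paper. You use the commutation of $F$ and $E_\beta$ as functions of $\hat n$, together with the ideal identity \eqref{eq:F-logical-H}, to get $FW_\beta=W_\beta H$; unitarity of $F$ to get $S_\beta=HS_\beta H$; and the commutation of $S_\beta^{-1/2}$ with $H$ to get $FV_\beta=V_\beta H$. Your Fock-coefficient justification of the distributional step, pairing the combs with Hermite functions and using $F^\dagger\ket{n}=i^{-n}\ket{n}$, is a correct and useful elaboration of a point the paper leaves implicit.
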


\begin{proof}
Equations~\eqref{eq:energy-filter} and~\eqref{eq:F} are functions of
\(\hat n\), so they commute.  Combining this with
Eq.~\eqref{eq:F-logical-H} yields \(FW_\beta=W_\beta H\).  Unitarity of
\(F\) then gives
\[
 S_\beta=(FW_\beta)^\dagger(FW_\beta)
         =H^\dagger S_\beta H.
\]
Since \(H=H^\dagger=H^{-1}\), this is equivalent to
\([S_\beta,H]=0\).  The last identity follows from
\(FV_\beta=W_\beta H S_\beta^{-1/2}
=W_\beta S_\beta^{-1/2}H\).
\end{proof}

The filtered state in Eq.~\eqref{eq:filtered-bell} is generally not equal to
\((V_\beta\otimes V_\beta)\ket{\psi_{\rm B}}\).  Indeed,
\[
 W_\beta=V_\beta S_\beta^{1/2},
\]
so the physical filter also produces the logical distortion
\(S_\beta^{1/2}\otimes S_\beta^{1/2}\).  This distinction is retained in all
numerical correlations.

\section{Full-oscillator measurement settings}
\label{sec:measurements}

\subsection{Position and momentum binnings}

Let
\begin{equation}
 b(x):=\sgn\!\left[\cos(\sqrt\pi\,x)\right],
 \label{eq:bin-function}
\end{equation}
with either fixed assignment at the measure-zero bin boundaries.  We define
the physical reflections
\begin{equation}
 Q:=b(\hat q),
 \qquad
 P:=b(\hat p)=FQF^\dagger.
 \label{eq:QP}
\end{equation}
Both are Hermitian and satisfy
\begin{equation}
 Q^2=P^2=\Id_{\cH_{\rm osc}}.
\end{equation}
On the ideal centered code, \(Q\) and \(P\) restrict to \(Z\) and \(X\),
respectively.  They are not exact logical Pauli operators for finite
\(\beta\); their physical correlations are evaluated below.  We keep the
fixed ideal-lattice binning in Eq.~\eqref{eq:bin-function}, rather than
recalibrating the bins as a function of \(\beta\).  This matches the standard
periodic readout and keeps the measurement definition independent of the
unknown state~\cite{Yang2026}.

\subsection{Why diagonal homodyne does not give a tilted real axis}

The diagonal quadratures provide a useful diagnosis of an otherwise plausible
but incorrect construction.  With the convention
\(Y=iXZ\), the Baker--Campbell--Hausdorff formula gives, on the ideal code,
\begin{align}
 e^{i\sqrt\pi(\hat q+\hat p)}
 &=
 i\,\bar Z\bar X^\dagger
 \longmapsto -Y,
 \label{eq:qplusp-Y}\\
 e^{i\sqrt\pi(\hat q-\hat p)}
 &=
 -i\,\bar Z\bar X
 \longmapsto +Y.
 \label{eq:qminusp-Y}
\end{align}
Accordingly, periodic binnings aligned with the two diagonal grid directions
read out \(-Y\) and \(+Y\), up to the corresponding convention for outcome
labels.  This is the expected GKP logical-\(Y\) readout, not a finite-energy
misalignment of \((X\pm Z)/\sqrt2\).  Diagonal probability densities alone do
not expose the phase in Eqs.~\eqref{eq:qplusp-Y} and
\eqref{eq:qminusp-Y}; the off-diagonal code-space matrix elements are needed
to distinguish an \(X\)-type action from a \(Y\)-type action.  We therefore do
not use diagonal homodyne in the CHSH settings.

\subsection{A projective modular-four reflection}

Let
\begin{equation}
 \Pi_r
 :=
 \sum_{k=0}^{\infty}
 \ketbra{4k+r}{4k+r},
 \qquad r=0,1,2,3.
 \label{eq:Pi-r}
\end{equation}
The quarter rotation has spectral decomposition
\begin{equation}
 F=\Pi_0+i\Pi_1-\Pi_2-i\Pi_3.
 \label{eq:F-spectral}
\end{equation}
A four-outcome measurement of \(\hat n\bmod4\) is not yet the binary
measurement required in a CHSH experiment.  We fix the deterministic
coarse-graining
\begin{equation}
 M_4:=\Pi_0+\Pi_1-\Pi_2-\Pi_3.
 \label{eq:M4}
\end{equation}
Thus \(\{0,1\}\) gives outcome \(+1\) and \(\{2,3\}\) gives outcome \(-1\).
Equivalently,
\begin{equation}
 M_4
 =
 \sqrt2\cos\left[\frac{\pi}{2}
 \left(\hat n-\frac12\right)\right]
 =
 \frac{e^{-i\pi/4}F+e^{i\pi/4}F^\dagger}{\sqrt2}.
 \label{eq:M4-function}
\end{equation}
Label the resolved bits by \(b_1b_0\) and set \(r=2b_1+b_0\).  Then
\[\begin{array}{c|c|c|c}r&F\text{ eigenvalue}&b_1b_0&M_4\text{ output}\\\hline 0&1&00&+1\\1&i&01&+1\\2&-1&10&-1\\3&-i&11&-1\end{array}\]
Thus \(M_4=(-1)^{b_1}\).  Exchanging ancilla order or phase convention merely relabels the resolved projectors if the classical output map is relabeled with them.  We fix this table and \(F=e^{i\pi\hat n/2}\); in this precise sense, \(M_4\) is the most-significant phase bit.
It is a full-Hilbert-space reflection:
\begin{equation}
 M_4^\dagger=M_4,
 \qquad
 M_4^2=\Id_{\cH_{\rm osc}}.
 \label{eq:M4-reflection}
\end{equation}

The two tilted physical settings are
\begin{equation}
 A_+^{\rm phys}:=M_4,
 \qquad
 A_-^{\rm phys}:=
 \bar X^\dagger M_4\bar X,
 \qquad
 \bar X=e^{-i\sqrt\pi\hat p}.
 \label{eq:physical-tilted}
\end{equation}
Both are Hermitian reflections on the entire oscillator Hilbert space.

\begin{proposition}[Logical action of the tilted settings]
\label{prop:tilted-action}
The first setting is an exact Hadamard on the canonically embedded
number-filtered code:
\begin{equation}
 M_4V_\beta=V_\beta H,
 \qquad
 V_\beta^\dagger M_4V_\beta
 =
 H=\frac{X+Z}{\sqrt2}.
 \label{eq:M4-exact-H}
\end{equation}
In the ideal code,
\begin{equation}
 \left.
 \bar X^\dagger M_4\bar X
 \right|_{\cC_{\rm ideal}}
 =
 XHX
 =
 \frac{X-Z}{\sqrt2}.
 \label{eq:Aminus-ideal}
\end{equation}
Unlike Eq.~\eqref{eq:M4-exact-H}, Eq.~\eqref{eq:Aminus-ideal} is asserted only
for the ideal code; at finite \(\beta\), the conjugated observable is evaluated
on the full oscillator state below.
\end{proposition}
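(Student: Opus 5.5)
The plan is to reduce both claims to the spectral identity Eq.~\eqref{eq:M4-function}, which writes \(M_4\) as a fixed linear combination of \(F\) and \(F^\dagger\), together with the exact covariance of Proposition~\ref{prop:exact-H}.

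For the first claim, Proposition~\ref{prop:exact-H} gives \(FV_\beta=V_\beta H\). Taking adjoints gives \(V_\beta^\dagger F^\dagger=HV_\beta^\dagger\). For the other side I need \(F^\dagger V_\beta\). Since \(F\) is unitary and \(H^2=\Id\), the relation \(FV_\beta=V_\beta H\) gives \(V_\beta=F^\dagger V_\beta H\), hence \(F^\dagger V_\beta=V_\beta H\) as well. Both \(F\) and \(F^\dagger\) therefore intertwine \(V_\beta\) with \(H\). Linearity of Eq.~\eqref{eq:M4-function} then yields
\[
M_4V_\beta=\frac{e^{-i\pi/4}+e^{i\pi/4}}{\sqrt2}\,V_\beta H=V_\beta H,
\]
because \(e^{-i\pi/4}+e^{i\pi/4}=\sqrt2\). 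Multiplying on the left by \(V_\beta^\dagger\) and using \(V_\beta^\dagger V_\beta=\Id_2\) gives the compression identity.

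The same argument can be phrased spectrally. \(FV_\beta=V_\beta H\) says that the range of \(V_\beta\) splits into \(F\)-eigenvectors with eigenvalues \(\pm1\), namely the images of the \(\pm1\) eigenvectors of \(H\). These vectors are therefore supported in \(\Pi_0\) and \(\Pi_2\) only. On those sectors \(M_4\) acts as \(+1\) and \(-1\), so it coincides with \(F\) there. This viewpoint also confirms the sign and label conventions fixed in the table.

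For the ideal-code statement, \(\bar X\) preserves \(\cC_{\rm ideal}\) and acts there as the logical \(X\), which is unitary and self-inverse on the code. In particular \(\bar X^\dagger\) also acts as \(X\) on the code. The first part, at \(\beta\to0\) or directly from Eq.~\eqref{eq:F-logical-H} with the same spectral argument, gives \(M_4|_{\cC}=H\) with \(M_4\) mapping the code to itself. Composing, \(\bar X^\dagger M_4\bar X|_{\cC}=XHX\). Finally,
\[
XHX=\frac{X+XZX}{\sqrt2}=\frac{X-Z}{\sqrt2},
\]
using \(XZX=-Z\).

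The main obstacle is rigor in the ideal case, since the codewords are distributions rather than vectors. There the statement must be read as an identity of the formal actions: \(\bar X\) maps the comb \(j\) to \(j\oplus1\) exactly, and \(F\) acts as in Eq.~\eqref{eq:F-logical-H}. Support in the residues \(\{0,2\}\) must then be read via parity and the \(F\)-eigenvalue rather than via Fock coefficients. I would state the result as an identity of these formal actions, or equivalently as the \(\beta\to0\) limit of the compressions, taken with care since \(\bar X\) does not commute with \(E_\beta\).
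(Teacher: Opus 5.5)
Your proposal is correct, but your main argument for Eq.~\eqref{eq:M4-exact-H} is organized differently from the paper's.

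\textbf{The paper's route.} The paper uses parity as a separate input. The centered combs are even and \(E_\beta\) commutes with \((-1)^{\hat n}\), so \(\operatorname{ran}V_\beta\) lies in the range of \(\Pi_0+\Pi_2\), where \(M_4\) and \(F\) coincide. Proposition~\ref{prop:exact-H} then finishes the proof.

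\textbf{Your route.} You use only \(FV_\beta=V_\beta H\) and \(H^2=\Id\) to obtain \(F^\dagger V_\beta=V_\beta H\). The cosine form in Eq.~\eqref{eq:M4-function} then gives \(M_4V_\beta=V_\beta H\). Evenness becomes a consequence rather than a hypothesis, since \(F^2V_\beta=V_\beta H^2=V_\beta\); your spectral rephrasing makes this explicit.

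\textbf{What each buys.} Your argument shows that exactness of the first setting needs nothing beyond the Hadamard covariance. The paper's argument keeps the parity structure in view, and it reuses that structure immediately for the odd-sector component in Eq.~\eqref{eq:odd-component}. The same parity argument also shows at once that every extension \(M_4^{(s_1,s_3)}\) of Eq.~\eqref{eq:all-extensions} is an exact Hadamard on the filtered code. Your linear-combination identity, by contrast, is tied to the \((+1,-1)\) assignment, although your spectral version does cover all four.

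\textbf{The ideal-code identity.} Here you argue as the paper does: \(\bar X^2\) is a stabilizer, so \(\bar X\) and \(\bar X^\dagger\) both restrict to \(X\), and \(XZX=-Z\). The formal-action reading is the right one. Your alternative reading as a ``\(\beta\to0\) limit of the compressions'' is not an equivalent reformulation. It is a separate convergence statement, which you have not proved and do not need.
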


\begin{proof}
Every vector in the centered filtered code has even parity.  On the even subspace, both \(M_4\) and \(F\) equal \(+1\) on \(n=0\pmod4\) and \(-1\) on \(n=2\pmod4\).  Hence \(M_4=F\) on the range of \(V_\beta\), and
Eq.~\eqref{eq:M4-exact-H} follows from
Proposition~\ref{prop:exact-H}.  On the ideal code,
\(\bar X\) restricts to logical \(X=X^\dagger\), which gives
Eq.~\eqref{eq:Aminus-ideal}.  At finite energy,
\([E_\beta,\bar X]\neq0\).  The parity calculation below explicitly tracks
the odd-sector support generated by the displacement, so its conjugated action
must be evaluated on the full oscillator state.
\end{proof}

The relevant finite-energy mechanism can be seen directly from parity.  Let
\begin{equation}
 \mathcal P:=(-1)^{\hat n}=F^2,
 \qquad
 \Pi_{\rm odd}:=\Pi_1+\Pi_3=\frac{\Id-\mathcal P}{2}.
 \label{eq:parity-odd-projector}
\end{equation}
The centered ideal codewords are parity even, and the number filter commutes
with parity.  Hence
\begin{equation}
 \mathcal P W_\beta=W_\beta.
 \label{eq:filtered-even}
\end{equation}
Parity reverses the displacement,
\begin{equation}
 \mathcal P\bar X\mathcal P=\bar X^\dagger.
 \label{eq:parity-displacement}
\end{equation}
Consequently, for every
\(\ket{\psi_\beta}\in\operatorname{ran}W_\beta\),
\begin{equation}
 \Pi_{\rm odd}\bar X\ket{\psi_\beta}
 =
 \frac{\bar X-\bar X^\dagger}{2}\ket{\psi_\beta}.
 \label{eq:odd-component}
\end{equation}
This vector is generally nonzero at finite energy.  On the ideal code,
\(\bar X^2\) is a stabilizer, so \(\bar X\) and \(\bar X^\dagger\) have the
same logical action and the odd component vanishes formally.  Thus the key
fact is not only \([E_\beta,\bar X]\neq0\): the displacement fails to preserve
parity and therefore fails to preserve the centered finite-energy code
manifold.  The second modular measurement genuinely probes the
\(n=1,3\pmod4\) sectors, whose binary assignments may not be discarded.

\subsection{Odd-sector extensions and the one-bit alternative}

Every deterministic binary coarse-graining of the fixed modular-four PVM
\(\{\Pi_r\}_{r=0}^3\) that assigns the logical Hadamard outcomes on the even
sectors has the form
\begin{equation}
 M_4^{(s_1,s_3)}
 :=
 \Pi_0+s_1\Pi_1-\Pi_2+s_3\Pi_3,
 \qquad
 s_1,s_3\in\{\pm1\}.
 \label{eq:all-extensions}
\end{equation}
The ideal code does not select \(s_1,s_3\).  Our primary choice,
Eq.~\eqref{eq:M4}, is \((s_1,s_3)=(+1,-1)\) and is motivated by the
binary phase bit.  We nevertheless evaluate all four choices in
Section~\ref{subsec:extension-sensitivity}.

This dependence can also be bounded directly.  Define
\begin{equation}
 \ell_{\rm odd}(\beta)
 :=
 \norm{
 V_\beta^\dagger
 \bar X^\dagger\Pi_{\rm odd}\bar X
 V_\beta
 }.
 \label{eq:odd-leakage}
\end{equation}
If \(M_4\) and \(M'_4\) agree on the even sectors, then
\begin{equation}
 \norm{
 V_\beta^\dagger\bar X^\dagger
 (M_4-M'_4)\bar X V_\beta
 }
 \le 2\ell_{\rm odd}(\beta).
 \label{eq:extension-bound}
\end{equation}

It is also possible to use one ancilla rather than resolving both modular
bits.  The controlled-quarter-rotation/Hadamard-test idea and its
modular-number interpretation already appear in the original GKP
proposal~\cite{GKP2001}.  In the present phase convention, the test has
oscillator effects
\begin{equation}
 E_x
 =
 \frac12\Id+\frac{x}{4}(F+F^\dagger),
 \qquad x\in\{+1,-1\},
 \label{eq:one-bit-effects}
\end{equation}
and effective observable
\begin{equation}
 C_4:=E_+-E_-=\frac{F+F^\dagger}{2}=\Pi_0-\Pi_2.
 \label{eq:C4}
\end{equation}
This is a valid binary POVM, with random outcomes on the odd sectors, but it
is not the deterministic reflection in Eq.~\eqref{eq:M4}.  Its ancilla
implementation is a projective Naimark dilation and is therefore admissible
in a CHSH experiment.  Our numerical comparison below keeps the two
implementations distinct.  A two-bit phase-estimation measurement that
resolves \(\{\Pi_r\}\), followed by the declared classical coarse-graining,
implements \(M_4\) itself.

\subsection{Finite-energy calibration of the displaced modular measurement}
\label{subsec:calibration-definition}

The canonical setting in Eq.~\eqref{eq:physical-tilted} uses the ideal logical
displacement \(\bar X=D(\sqrt\pi)\).  At finite energy, it is natural to ask
whether its amplitude remains optimal while every other element of the
protocol is kept fixed.  We therefore introduce
\begin{equation}
 D(d):=e^{-id\hat p},
 \qquad
 A_-^{\rm phys}(d):=D(d)^\dagger M_4D(d),
 \qquad d>0.
 \label{eq:displacement-family}
\end{equation}
For every \(d\),
\begin{equation}
 A_-^{\rm phys}(d)^\dagger=A_-^{\rm phys}(d),
 \qquad
 A_-^{\rm phys}(d)^2=\Id,
 \label{eq:calibrated-reflection}
\end{equation}
so the CHSH and self-testing statements apply without modification.  Only at
\(d=\sqrt\pi\) do we call \(D(d)\) the ideal logical Pauli displacement.

For a fixed value of \(\beta\), define
\begin{equation}
 S(\beta,d)
 :=
 \left\langle
 M_4\otimes(Q+P)
 +A_-^{\rm phys}(d)\otimes(Q-P)
 \right\rangle_\beta.
 \label{eq:chsh-distance}
\end{equation}
We define a local calibration around the ideal displacement, not a global optimization over translations.  The same symmetric \(\pm50\%\) window about \(\sqrt\pi\) is fixed for every \(\beta\); it excludes the near-zero duplication of the undisplaced setting and does not extend to multiple neighboring translations.  Thus
\begin{equation}\mathcal I:=\left[\frac12\sqrt\pi,\frac32\sqrt\pi\right].\label{eq:calibration-interval}\end{equation}
This is a protocol definition for a one-parameter finite-energy correction, not a claim of a unique physical interval.
The finite-energy calibrated prescription is
\begin{equation}
 S_{\rm cal}(\beta):=\max_{d\in\mathcal I}S(\beta,d),
 \qquad
 d_{\rm opt}(\beta)\in
 \operatorname*{arg\,max}_{d\in\mathcal I}S(\beta,d).
 \label{eq:calibrated-score}
\end{equation}
The canonical ideal-logical prescription is
\(S_{\rm fix}(\beta):=S(\beta,\sqrt\pi)\).  In the calibration, the binning
\(b(q)=\sgn[\cos(\sqrt\pi q)]\), the filtered Bell state, every other CHSH
setting, and the primary odd-sector assignment
\((s_1,s_3)=(+1,-1)\) remain fixed.  Thus the calibration varies one physical
parameter and does not jointly optimize the odd-sector rule.  The value of
\(d\) is chosen from an independently characterized \(\beta\) before Bell-test
data are collected; optimizing it on the same experimental sample would
require a selection-aware confidence analysis.  The characterization of
\(\beta\) is used only to choose a measurement setting expected to give a
large score; it is not an assumption in the device-independent inference.  If
\(\beta\) is misestimated, or if the selected \(d\) is not optimal, the
observed score may decrease, but any valid confidence lower bound on that
score can still be inserted into Corollary~\ref{thm:kaniewski}.  Calibration data
and Bell-test data should therefore be treated as independent data sets unless
the statistical analysis explicitly accounts for the selection of \(d\).

For comparison, the contracted lattice spacing suggests the diagnostic
\begin{equation}
 d_{\rm ctr}(\beta):=\frac{\sqrt\pi}{\cosh\beta}.
 \label{eq:contracted-distance}
\end{equation}
We do not assume that \(d_{\rm ctr}\) is optimal.  The numerical results below
in fact find \(d_{\rm opt}>\sqrt\pi>d_{\rm ctr}\) over the displayed range,
with \(d_{\rm opt}/\sqrt\pi\) approaching one as the squeezing increases.

\section{Exact physical correlations}
\label{sec:correlations}

\subsection{A compact Gram-matrix formula}

For a physical oscillator operator \(M\), define its covariant matrix in the
unorthonormalized filtered columns by
\begin{equation}
 \Gamma_\beta(M):=W_\beta^\dagger M W_\beta.
 \label{eq:Gamma}
\end{equation}
The expectation of any product observable in the physical state
\(\ket{\Psi_\beta}\) is exactly
\begin{equation}
 \langle M\otimes N\rangle_\beta
 =
 \frac{
 \bra{\psi_{\rm B}}
 \Gamma_\beta(M)\otimes\Gamma_\beta(N)
 \ket{\psi_{\rm B}}
 }{
 \bra{\psi_{\rm B}}
 S_\beta\otimes S_\beta
 \ket{\psi_{\rm B}}
 }.
 \label{eq:physical-correlation-formula}
\end{equation}
Equation~\eqref{eq:physical-correlation-formula} retains both the finite
codeword overlap and the logical distortion induced by the filter.  It is
merely an exact way to evaluate a physical expectation value; no measurement
is replaced by its logical compression.

The rotation covariance reduces the \(P\) calculation to the \(Q\)
calculation:
\begin{equation}
 \Gamma_\beta(P)
 =
 H\Gamma_\beta(Q)H.
 \label{eq:P-from-Q}
\end{equation}
For \(M_4\), even parity and Eq.~\eqref{eq:finite-H-covariance} give
\begin{equation}
 \Gamma_\beta(M_4)
 =
 S_\beta H
 =
 HS_\beta.
 \label{eq:Gamma-M4}
\end{equation}
No analogous simplification removes the finite-energy calculation for
\(D(d)^\dagger M_4D(d)\).

\subsection{Correlators}

We use the following directly measurable quantities:
\begin{align}
 C_{XZ}(\beta)
 &:=
 \langle P\otimes Q\rangle_\beta,
 &
 C_{ZX}(\beta)
 &:=
 \langle Q\otimes P\rangle_\beta,
 \label{eq:cross-correlations}\\
 C_0(\beta)
 &:=
 \langle P\otimes P+Q\otimes Q\rangle_\beta,
 \label{eq:C0}\\
 C_+(\beta)
 &:=
 \left\langle
 M_4\otimes(Q+P)
 \right\rangle_\beta,
 \label{eq:Cplus}\\
 C_-(\beta,d)
 &:=
 \left\langle
 D(d)^\dagger M_4D(d)
 \otimes(Q-P)
 \right\rangle_\beta.
 \label{eq:Cminus}
\end{align}
The CHSH expectation and its fixed-displacement specialization are
\begin{equation}
 S(\beta,d)=C_+(\beta)+C_-(\beta,d),
 \qquad
 S_{\rm fix}(\beta)=S(\beta,\sqrt\pi).
 \label{eq:physical-chsh}
\end{equation}
In the ideal limit, for \(d=\sqrt\pi\),
\[
 C_{XZ},C_{ZX}\to1,
 \quad
 C_0\to0,
 \quad
 C_+,C_-(\,\cdot\,,\sqrt\pi)\to\sqrt2,
 \quad
 S_{\rm fix}\to2\sqrt2.
\]
The algebraic identity
\(2\sqrt2-S_{\rm fix}
=(\sqrt2-C_+)+(\sqrt2-C_-(\beta,\sqrt\pi))\) shows that the aggregate CHSH
shortfall is not independent of the two tilted-setting shortfalls.

\subsection{Fock-space representation}

Let
\begin{equation}
 \phi_n(x)
 =
 \frac{H_n(x)e^{-x^2/2}}
 {\pi^{1/4}\sqrt{2^n n!}}
 \label{eq:hermite-functions}
\end{equation}
be the real harmonic-oscillator wavefunctions.  Up to one common
\(\beta\)-dependent scalar, the Fock coefficients of the filtered columns are
\begin{equation}
 \langle n|w_{j,\beta}\rangle
 =
 e^{-\beta n}
 \sum_{s\in\mathbb Z}
 \phi_n\!\left((2s+j)\sqrt\pi\right).
 \label{eq:fock-coefficients}
\end{equation}
This expression follows directly by inserting the number basis between
\(E_\beta\) and the ideal comb.  It provides an efficient representation of
\(S_\beta\), \(M_4\), and the displaced setting.  The \(Q\) matrix elements
are evaluated by binwise quadrature in position space, and
Eq.~\eqref{eq:P-from-Q} supplies \(P\).  The displacement is applied as the
matrix exponential of the tridiagonal Fock representation of \(\hat p\).
Appendix~\ref{app:numerics} records the numerical protocol in more detail.

For later reporting define
\begin{align}
 \bar n(\beta)&:=\langle\Psi_\beta|\hat n\otimes\Id|\Psi_\beta\rangle=\langle\Psi_\beta|\Id\otimes\hat n|\Psi_\beta\rangle,\label{eq:mean-photon-number}\\
 p_{\rm odd}(\beta,d)&:=\langle\Psi_\beta|D(d)^\dagger\Pi_{\rm odd}D(d)\otimes\Id|\Psi_\beta\rangle.\label{eq:odd-probability}
\end{align}
The first equality follows from exchange symmetry.  We use \(p_{\rm odd}^{\rm fix}=p_{\rm odd}(\beta,\sqrt\pi)\) and \(p_{\rm odd}^{\rm cal}=p_{\rm odd}(\beta,d_{\rm opt}(\beta))\); these probabilities differ from \(\ell_{\rm odd}\) in Eq.~\eqref{eq:odd-leakage}.

\section{Numerical results}
\label{sec:numerics}

\subsection{Fixed and calibrated prescriptions}

For the production data, we used a Fock cutoff \(N_{\rm F}=520\), lattice
indices \(|s|\le22\), position range \(|q|\le30\), and 128-point
Gauss--Legendre quadrature separately on every bin interval.  All codeword
normalizations and overlaps were computed from the same truncated vectors.
The canonical calculation uses precisely Eqs.~\eqref{eq:QP} and
\eqref{eq:physical-tilted}.  The calibrated calculation changes only the
displacement amplitude in Eq.~\eqref{eq:displacement-family}.

For the centered state, the four directly evaluated product correlators obey
\begin{equation}
 C_{00}=C_{01}=\frac{C_+}{2}
 \label{eq:correlator-symmetry}
\end{equation}
to numerical precision, where
\(C_{xy}:=\langle A_x^{\rm phys}\otimes B_y^{\rm phys}\rangle_\beta\),
\((B_0^{\rm phys},B_1^{\rm phys})=(Q,P)\), and
\((A_0^{\rm phys},A_1^{\rm phys})=(M_4,A_-^{\rm phys}(d))\).
The odd probability \(p_{\rm odd}\) is evaluated before binary
coarse-graining; no odd event is postselected.

For each \(\beta\), we first scan 81 equally spaced values on the interval
\(\mathcal I\) in Eq.~\eqref{eq:calibration-interval} and then use a bounded
one-dimensional refinement around the best grid point with relative
displacement tolerance below \(10^{-10}\).  The selected maximum is well
inside the interval and lies on a single continuous branch over the displayed
range.  Additional coarse scans over
\(0.05\le d/\sqrt\pi\le8\) at the calibrated Bell threshold, \(5\) dB, and
\(14\) dB found no larger peak.  We nevertheless define the calibrated
prescription only on the explicit interval \(\mathcal I\), rather than claim a
global maximum over all \(d>0\).

\begin{table}[H]
\centering
\caption{Representative results for the assumed filtered source model.
Here \(\bar n\) is the per-mode expectation in Eq.~\eqref{eq:mean-photon-number}, and \(\fK\) denotes the
model evaluation of Eq.~\eqref{eq:fK} on the computed score.  The simulation
does not provide a certified one-sided numerical error interval.}
\label{tab:primary-results}
\small
\resizebox{\textwidth}{!}{%
\begin{tabular}{ccccccccc}
\toprule
\(r_{\rm dB}\) & \(\bar n\) & \(d_{\rm opt}/\sqrt\pi\) &
\(S_{\rm fix}\) & \(S_{\rm cal}\) &
\(\fK(S_{\rm fix})\) & \(\fK(S_{\rm cal})\) &
\(p_{\rm odd}^{\rm fix}\) & \(p_{\rm odd}^{\rm cal}\)\\
\midrule
5  & 0.8923  & 1.10815 & 2.10052 & 2.21292 & 0.50000 & 0.57410 & 0.32850 & 0.36326\\
6  & 1.4018  & 1.10160 & 2.28636 & 2.41883 & 0.62492 & 0.71658 & 0.27904 & 0.32061\\
8  & 2.6487  & 1.08373 & 2.50604 & 2.64492 & 0.77693 & 0.87303 & 0.19718 & 0.24725\\
10 & 4.5000  & 1.06426 & 2.62198 & 2.74230 & 0.85715 & 0.94041 & 0.13509 & 0.18591\\
12 & 7.4245  & 1.04670 & 2.69486 & 2.78858 & 0.90758 & 0.97243 & 0.08999 & 0.13556\\
14 & 12.0594 & 1.03255 & 2.74267 & 2.81072 & 0.94066 & 0.98775 & 0.05880 & 0.09584\\
\bottomrule
\end{tabular}}
\end{table}

Figure~\ref{fig:main-results} displays the full numerical curves.  The first
tilted contribution \(C_+\) converges more rapidly than the second.  This is
the numerical manifestation of Proposition~\ref{prop:tilted-action}:
\(M_4\) is an exact finite-energy Hadamard in the canonical frame, whereas the
displaced conjugate is not an exact finite-energy \(A_-\).  Calibration raises
\(C_-\), but also raises \(p_{\rm odd}\); the gain therefore relies on the
fixed phase-bit assignment on the odd sectors.  This dependence is part of
the measurement specification, not an additional fitting parameter: the same
assignment is used for every squeezing value and is fixed before the
displacement is calibrated.  The optimized displacement is larger than
\(\sqrt\pi\), while the contracted-spacing diagnostic \(d_{\rm ctr}\) is
smaller, so the two do not coincide.  Thus the optimum is not determined by
the contraction of the lattice centers alone; it reflects the full signed
contribution of the displaced modular sectors, including the odd sectors
activated by the displacement.  The large calibrated gain is a property of
the resolved modular-four PVM followed by the declared deterministic
phase-bit coarse-graining.  Independently optimizing \(d\) for the one-bit
Hadamard-test POVM over the same interval gives only a much smaller
improvement; see Section~\ref{subsec:extension-sensitivity}.  The phase-bit
assignment is prespecified and never optimized with squeezing.
\begin{figure}[H]
\centering
\includegraphics[width=\textwidth]{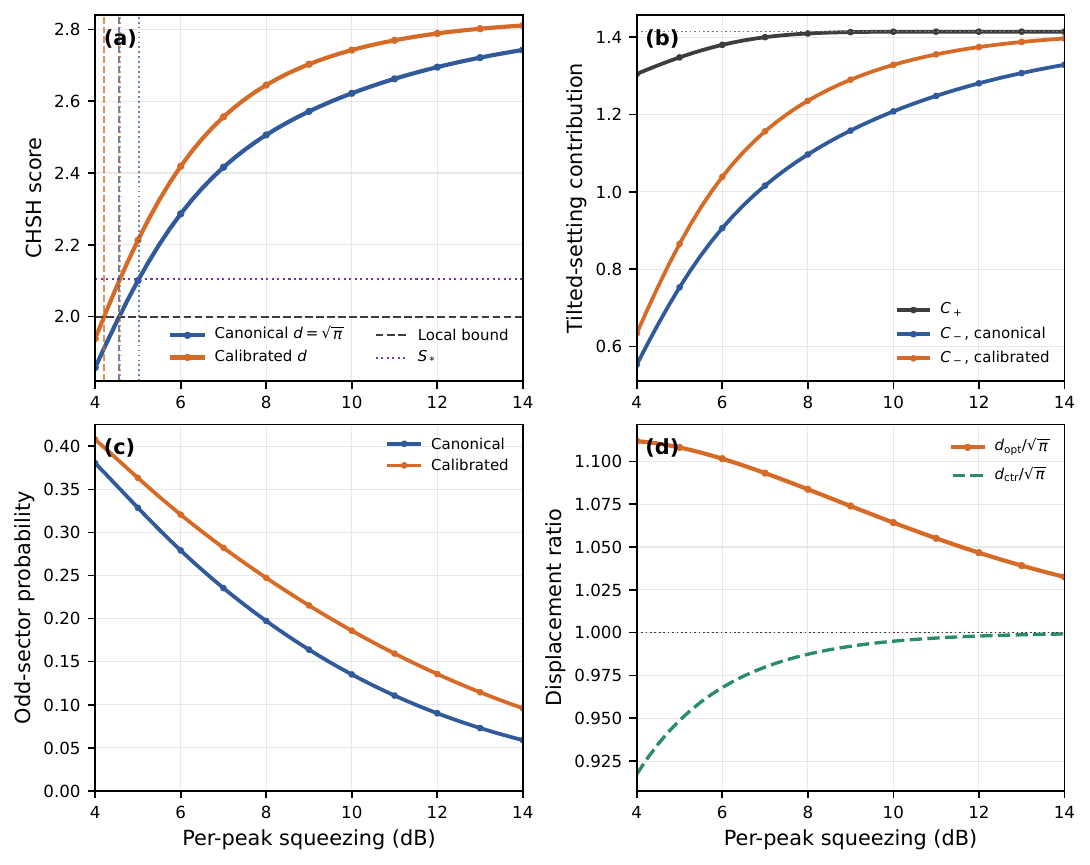}
\caption{Finite-energy performance of the canonical and calibrated
prescriptions.  (a)~CHSH scores, with the local bound and the threshold
\(S_*\), above which the affine branch of the Kaniewski bound exceeds
\(1/2\), marked.  Vertical lines mark the corresponding fixed and calibrated
crossings of the local and \(S_*\) thresholds.  (b)~The two tilted-setting
contributions.  (c)~Odd-sector probabilities after the displacement and before binary coarse-graining.
(d)~The calibrated displacement and \(d_{\rm ctr}\), which is shown only as a
contracted-lattice diagnostic and is not used in the optimization.  Points are
computed at 0.25-dB intervals; connecting lines are guides to the eye.
Calibration varies only \(d\) and keeps \((s_1,s_3)=(+1,-1)\) fixed.  All
calibrated curves shown here refer to this deterministic rule; the
independently calibrated one-bit POVM is not plotted.}
\label{fig:main-results}
\end{figure}

The fixed prescription crosses the local and Kaniewski-bound thresholds at
\begin{align}
 r_{\rm dB}^{\rm Bell,fix}&=4.56\ {\rm dB},
 \label{eq:Bell-threshold}\\
 r_{\rm dB}^{\rm K,fix}&=5.02\ {\rm dB}.
 \label{eq:ST-threshold}
\end{align}
The calibrated prescription crosses them at
\begin{align}
 r_{\rm dB}^{\rm Bell,cal}&=4.21\ {\rm dB},
 \label{eq:Bell-threshold-cal}\\
 r_{\rm dB}^{\rm K,cal}&=4.58\ {\rm dB}.
 \label{eq:ST-threshold-cal}
\end{align}
These are model-based numerical estimates for the state and measurements
defined above, not loss, detector, or finite-sample thresholds for a particular
experiment.  In an experiment, evaluating \(\fK\) at a valid confidence lower
bound \(S_{\rm L}\) gives the device-independent confidence statement in
Corollary~\ref{thm:kaniewski}.

\subsection{The additional \texorpdfstring{\(X/Z\)}{X/Z} checks}

Table~\ref{tab:cross-results} reports the physical correlations from
Eq.~\eqref{eq:extra-ideal-checks}.  The symmetry of the centered construction
gives \(C_{XZ}=C_{ZX}\) to numerical precision.  These quantities approach
their ideal targets more rapidly than the displaced tilted setting.

\begin{table}[H]
\centering
\caption{Physical cross-correlations for the same runs as
Table~\ref{tab:primary-results}.  Values below \(10^{-12}\) are reported only
as a bound consistent with the numerical precision.}
\label{tab:cross-results}
\begin{tabular}{c@{\qquad}c@{\qquad}c}
\toprule
\(r_{\rm dB}\) & \(C_{XZ}=C_{ZX}\) & \(\lvert C_0\rvert\)\\
\midrule
5  & 0.869080 & \(7.799\times10^{-2}\)\\
6  & 0.941142 & \(2.222\times10^{-2}\)\\
7  & 0.977658 & \(4.365\times10^{-3}\)\\
8  & 0.993016 & \(5.546\times10^{-4}\)\\
9  & 0.998293 & \(4.139\times10^{-5}\)\\
10 & 0.999698 & \(1.590\times10^{-6}\)\\
11 & 0.999965 & \(2.650\times10^{-8}\)\\
12 & 0.999998 & \(1.545\times10^{-10}\)\\
13 & 1.000000 & \(<10^{-12}\)\\
14 & 1.000000 & \(<10^{-12}\)\\
\bottomrule
\end{tabular}
\end{table}

\subsection{Cutoff convergence}
\label{subsec:convergence}

The small imaginary part of a computed expectation is not, by itself, an
accuracy test.  We instead varied every truncation parameter independently at
\(14\) dB, the most demanding row in Table~\ref{tab:primary-results}.
Table~\ref{tab:convergence} changes one parameter at a time while holding the
others at the production values
\((N_{\rm F},K,q_{\max},N_{\rm GL})=(520,22,30,128)\).
Here \(K\) is the lattice cutoff, \(q_{\max}\) is the position cutoff, and
\(N_{\rm GL}\) is the Gauss--Legendre order per bin.  Once
\(N_{\rm F}\ge400\), \(K\ge10\), \(q_{\max}\ge28\), and
\(N_{\rm GL}\ge32\), the CHSH score is stable at the \(10^{-9}\) level or
better in these one-parameter tests.  This study is not interval arithmetic
or an analytic tail bound and therefore does not certify the sign of the
remaining truncation error; it is convergence evidence for the reported
model estimates.

\begin{table}[H]
\centering
\caption{One-parameter-at-a-time convergence test at \(14\) dB.  A dash
denotes the production value of the parameter being varied.}
\label{tab:convergence}
\resizebox{\textwidth}{!}{%
\begin{tabular}{cccccc}
\toprule
\(\text{varied parameter}\) & \(\text{value}\) &
\(C_{XZ}\) & \(C_+\) & \(C_-(\beta,\sqrt\pi)\) & \(S_{\rm fix}\)\\
\midrule
production & --  & 0.999999999 & 1.414213561 & 1.328451880 & 2.742665441\\
\addlinespace
\(N_{\rm F}\) & 240 & 0.999999995 & 1.414213559 & 1.328451871 & 2.742665429\\
                & 320 & 0.999999999 & 1.414213561 & 1.328451880 & 2.742665441\\
                & 400 & 0.999999999 & 1.414213561 & 1.328451880 & 2.742665441\\
\addlinespace
\(K\)           & 6   & 0.999999999 & 1.414213561 & 1.328451878 & 2.742665439\\
                & 8   & 0.999999999 & 1.414213561 & 1.328451880 & 2.742665441\\
                & 10  & 0.999999999 & 1.414213561 & 1.328451880 & 2.742665441\\
\addlinespace
\(q_{\max}\)    & 22  & 0.999999998 & 1.414213561 & 1.328451880 & 2.742665441\\
                & 26  & 0.999999999 & 1.414213561 & 1.328451880 & 2.742665441\\
                & 28  & 0.999999999 & 1.414213561 & 1.328451880 & 2.742665441\\
\addlinespace
\(N_{\rm GL}\)  & 16  & 0.999976100 & 1.414196663 & 1.328436006 & 2.742632668\\
                & 24  & 0.999999998 & 1.414213561 & 1.328451880 & 2.742665441\\
                & 32  & 0.999999999 & 1.414213561 & 1.328451880 & 2.742665441\\
\bottomrule
\end{tabular}}
\end{table}

We also varied all cutoffs together and recomputed the four threshold roots.
This directly tests the numerically delicate region rather than inferring
threshold stability from the \(14\)-dB row alone.  Table~\ref{tab:coupled-thresholds}
shows that a coupled reduced calculation, the production calculation, and a
jointly enlarged calculation are stable across the tested truncations far beyond the precision quoted in
Eqs.~\eqref{eq:Bell-threshold}--\eqref{eq:ST-threshold-cal}.  The optimized
displacement ratios at the two calibrated crossings likewise agree within
\(10^{-9}\).  Root finding used absolute dB tolerances below
\(2\times10^{-7}\); changing the calibration grid from 61 to 101 points did
not change any quoted digit.

\begin{table}[H]
\centering
\caption{Coupled-cutoff convergence of the four model thresholds.  The cutoff
tuple is \(N_{\rm F},K,q_{\max},N_{\rm GL}\).  The six-digit values are numerical roots under the stated protocol, not certified interval enclosures.}
\label{tab:coupled-thresholds}
\small
\resizebox{\textwidth}{!}{%
\begin{tabular}{ccccc}
\toprule
cutoffs & \(r_{\rm dB}^{\rm Bell,fix}\) & \(r_{\rm dB}^{\rm K,fix}\) &
\(r_{\rm dB}^{\rm Bell,cal}\) & \(r_{\rm dB}^{\rm K,cal}\)\\
\midrule
\((400,14,28,64)\)   & 4.561246 & 5.024555 & 4.207852 & 4.584106\\
\((520,22,30,128)\)  & 4.561246 & 5.024555 & 4.207852 & 4.584106\\
\((640,26,34,160)\)  & 4.561246 & 5.024555 & 4.207852 & 4.584106\\
\bottomrule
\end{tabular}}
\end{table}

At the production cutoffs, the maximum normalization defect in the four
modular-sector probabilities over the fixed-\(d=\sqrt\pi\), \(5\)--\(14\) dB data set is
below \(10^{-15}\).  The Fourier-covariance residual at \(14\) dB is below
\(3\times10^{-15}\), and all raw expectation values are real to double
precision.  The lattice truncation with \(K=6\) has a visibly larger
Fourier-covariance residual, \(2.1\times10^{-5}\), even though its final score
is already close; this is why the production calculation uses \(K=22\).
Independent Mehler-kernel/Fock calculations of the \(Q\) matrix and
Laguerre-polynomial/matrix-exponential calculations of the displaced setting
were repeated at all four crossings.  Their spectral-norm residuals are at
the \(10^{-14}\) and \(10^{-15}\) levels, respectively; further details and
the machine-readable outputs are given in Appendix~\ref{app:numerics} and the
ancillary data.

\subsection{Sensitivity to the odd-sector convention}
\label{subsec:extension-sensitivity}

The dependence on the deterministic odd-sector assignments is exactly affine.
For \(r=1,3\), define
\begin{equation}
 u_r(\beta,d)
 :=
 \left\langle
 D(d)^\dagger\Pi_rD(d)\otimes(Q-P)
 \right\rangle_\beta.
 \label{eq:odd-affine-coefficients}
\end{equation}
Because the centered state is even before the undisplaced modular measurement,
the score associated with Eq.~\eqref{eq:all-extensions} is
\begin{equation}
 S(s_1,s_3;\beta,d)
 =
 S_{\rm one\text{-}bit}(\beta,d)+s_1u_1(\beta,d)+s_3u_3(\beta,d),
 \label{eq:odd-affine-score}
\end{equation}
where \(S_{\rm one\text{-}bit}\) uses the POVM observable \(C_4=\Pi_0-\Pi_2\).
It follows exactly that the midpoint of the minimum and maximum over the four
deterministic assignments, and also the average of their four scores, equals
\(S_{\rm one\text{-}bit}\).

For a separate comparison, we also define the independently calibrated
one-bit prescription
\begin{equation}
 S_{\rm 1b,cal}(\beta)
 :=
 \max_{d\in\mathcal I}S_{\rm one\text{-}bit}(\beta,d),
 \qquad
 d_{\rm 1b,opt}(\beta)
 \in
 \operatorname*{arg\,max}_{d\in\mathcal I}
 S_{\rm one\text{-}bit}(\beta,d).
 \label{eq:one-bit-calibration}
\end{equation}

\begin{table}[H]
\centering
\caption{CHSH scores for all four deterministic odd-sector assignments and
the one-bit POVM.  ``Fixed'' uses \(d=\sqrt\pi\).  ``At primary
\(d_{\rm opt}\)'' evaluates all four assignments and the one-bit POVM at the
single displacement optimized for the primary \((+1,-1)\) phase-bit rule.
Thus the one-bit entries in those rows are not its independently calibrated
scores.}
\label{tab:extension-sensitivity}
\small
\resizebox{\textwidth}{!}{%
\begin{tabular}{ccccccc}
\toprule
prescription & \(r_{\rm dB}\) & \((-1,-1)\) & \((-1,+1)\) &
\((+1,-1)\) & \((+1,+1)\) & one-bit at same \(d\)\\
\midrule
fixed & 5  & 2.157429 & 2.100523 & 2.100523 & 2.043616 & 2.100523\\
fixed & 8  & 2.514677 & 2.506040 & 2.506040 & 2.497402 & 2.506040\\
fixed & 12 & 2.694862 & 2.694856 & 2.694856 & 2.694849 & 2.694856\\
\addlinespace
at primary \(d_{\rm opt}\) & 5  & 2.022643 & 1.741629 & 2.212917 & 1.931903 & 1.977273\\
at primary \(d_{\rm opt}\) & 8  & 2.384900 & 2.102676 & 2.644922 & 2.362698 & 2.373799\\
at primary \(d_{\rm opt}\) & 12 & 2.608021 & 2.427431 & 2.788579 & 2.607989 & 2.608005\\
at primary \(d_{\rm opt}\) & 14 & 2.678827 & 2.546933 & 2.810721 & 2.678827 & 2.678827\\
\bottomrule
\end{tabular}}
\end{table}

For the canonical displacement, the primary \((+1,-1)\) rule and the one-bit
POVM give the same aggregate CHSH score within \(1.4\times10^{-15}\) at the
tabulated points.  We use this only as a numerical observation: the individual
odd-sector correlators do not vanish, and their cancellation occurs only in
the CHSH functional.  The cancellation disappears away from the canonical
displacement.  At \(5\) dB, for example, the one-bit score evaluated at the
displacement optimized for the primary phase-bit rule is \(1.977273\) and
therefore does not violate CHSH.  This borrowed setting should not be confused
with the independent calibration in Eq.~\eqref{eq:one-bit-calibration}.
Maximizing the one-bit score over the same interval \(\mathcal I\) gives
\(2.105517\) at \(5\) dB, compared with \(2.100523\) at the canonical
displacement and \(2.212917\) for the independently calibrated primary
phase-bit rule.  The independently calibrated one-bit Bell and \(S_*\)
crossings are approximately \(4.537\) dB and \(5.001\) dB, respectively.
Thus one-bit calibration produces only a small improvement, whereas the large
calibrated gain materially uses the deterministic phase-bit rule fixed in
Eq.~\eqref{eq:M4}.  This does not represent a hidden optimization over
odd-sector assignments: \((s_1,s_3)=(+1,-1)\) is declared before calibration,
the same rule is used for every \(\beta\), and no odd event is discarded.

\section{Discussion}
\label{sec:discussion}

\subsection{Relation to prior work and scope of the contribution}

The quarter-rotation identity and its modular-photon-number interpretation are
known from the original GKP proposal, as is diagonal homodyne as a logical
\(Y\) readout~\cite{GKP2001}.  The contribution of the present work is their
full-oscillator finite-energy incorporation into a CHSH self-testing
construction.  We define the binary measurements on every photon-number
sector, evaluate the displaced setting directly without postselection or an
abstract logical correction, retain the finite codeword overlap, and obtain
quantitative Bell-violation and extractability thresholds.  The one-parameter
calibration further shows how the physically consequential odd sectors may be
used without optimizing either the homodyne bins or their deterministic
assignments.

The original GKP discussion also cautions that the ideal
Hadamard/modular-number correspondence need not survive an arbitrary
finite-energy approximation~\cite{GKP2001}.  Our exact statement in
Proposition~\ref{prop:exact-H} is therefore deliberately limited to the
rotationally invariant number filter \(e^{-\beta\hat n}\); it is not asserted
for generic approximate grid states.
Related recent work uses the same rotationally invariant finite-energy and
modulo-four structure for self-Kerr-based GKP magic-state preparation rather
than Bell testing~\cite{Boudreault2026}.

This construction complements two close lines of work.  Marshall and
Weedbrook proposed a GKP-encoded CHSH protocol in which the tilted bases are
reached using a logical \(T\) gate followed by homodyne
readout~\cite{MarshallWeedbrook2014}.  Our route replaces that basis change by
modular photon-number information and a displacement.  Yang \emph{et al.}
showed that the homodyne-only logical-Pauli palette cannot violate CHSH on an
encoded Bell pair~\cite{Yang2026}.  The modular-four setting lies outside that
palette and therefore evades the obstruction.  It does so by introducing a
non-Gaussian measurement resource, not by contradicting the homodyne-only
no-go theorem.
Lopetegui-Gonz\'alez \emph{et al.} take the complementary route of tailoring
continuous-variable states and Bell inequalities to homodyne
measurements~\cite{Lopetegui2026}.  More generally, modular-variable methods
provide a bridge between oscillator observables and discrete quantum
information~\cite{Ketterer2016}.  The individual ingredients are known; the
advance is their quantitative full-oscillator finite-energy combination for
CHSH self-testing.

\subsection{Experimental implementation and imperfections}

The \(Q/P\) settings require homodyne readout and deterministic periodic
post-processing.  The projective \(M_4\) setting requires resolution of
\(\hat n\bmod4\), for example by a two-bit phase-estimation circuit using
controlled \(F\) and controlled \(F^2=(-1)^{\hat n}\), followed by the fixed
coarse-graining in Eq.~\eqref{eq:M4}.  In a cavity architecture these
controlled rotations can in principle be generated dispersively
~\cite{TerhalWeigand2016}.  The \(A_-^{\rm phys}\) setting additionally
applies the Gaussian displacement \(\bar X\) before the modular measurement.
If only the binary logical outcome is required, the one-bit alternative in
Eq.~\eqref{eq:C4} avoids deterministic assignment of the odd sectors but
implements a different POVM.

The present numerics isolate the intrinsic effect of finite energy.  They do
not model a specific detector efficiency, loss channel, ancilla fault model,
or finite sample size.  This is a deliberate scope boundary rather than an
assumption in the eventual device-independent conclusion.  All such effects
change the observed CHSH score, and Corollary~\ref{thm:kaniewski} applies to a
valid lower confidence bound on that observed score.  For example, if the four
product correlators in the expanded CHSH expression each differ from the
model by at most \(\eta\), then the observed score is lower bounded by
\(S(\beta,d)-4\eta\).

Photon loss is particularly relevant for the modular-four settings.  For an input component in an even residue sector, a single annihilation event moves it to the adjacent odd residue sector with the opposite \(M_4\) assignment.  Components already in an odd sector need not undergo a binary-value flip, so a complete loss analysis must retain the full residue distribution.  Loss during a
controlled rotation can also correlate the error with the measurement
ancilla.  Quantitative loss and fault-tolerance thresholds therefore require
a platform-specific open-system analysis; the finite-energy thresholds in
Eqs.~\eqref{eq:Bell-threshold}--\eqref{eq:ST-threshold-cal} should not be read
as experimental loss budgets.

\subsection{Certification scope}

The explicit numerical conclusion of this work is a predicted CHSH score and
a model-based evaluation of the state-extractability curve.  If instead a
valid experimental confidence lower bound \(S_{\rm L}\) is supplied,
Corollary~\ref{thm:kaniewski} gives the extractability bound
\(\Xi(\rho\to\psi_{\rm B})\ge\fK(S_{\rm L})\).  Equivalently, for every
\(\eps>0\), there are local normal CPTP maps whose qubit outputs have overlap
at least \(\fK(S_{\rm L})-\eps\) with \(\ket{\psi_{\rm B}}\).  We do not claim
from that corollary a separate finite-error operator-norm certification of all
six measurements.  The additional \(X/Z\) checks in
Table~\ref{tab:cross-results} are reported because they are useful in the
larger verification setting of Ref.~\cite{HayashiHajdusek2018}; they are not
presented as independent inputs to the Kaniewski fidelity bound.
The additional six-setting correlations reported here may also serve as
input to extensions of the self-guaranteed measurement-based verification
framework of Hayashi and Hajdu\v{s}ek to finite-energy GKP graph states; such
a graph-state result is not proved in the present paper.

Likewise, calculating the honest-device correlations is not itself a
device-independent experiment.  If black-box devices produce an observed
score, the extraction statement depends only on that score and not on the
GKP interpretation.  The GKP analysis establishes that the proposed
source-and-measurement model can enter the nontrivial regime.
The scope of robustness is the score-to-extractability relation stated in the Introduction.

\subsection{Other GKP lattices}

Bare photon number modulo four is tied to the centered square oscillator
frame.  At the ideal level, if a one-mode lattice is obtained by a symplectic
map \(S\) with Gaussian unitary \(U_S\), then the transported measurement is
\[
 M_{4,S}=U_SM_4U_S^\dagger.
\]
For a displaced phase frame, the corresponding displacement must also
conjugate the measurement.  At finite energy one must distinguish the
transported family \(U_SE_\beta\cC_{\square}\) from the native family
\(E_\beta U_S\cC_{\square}\), because a squeezing component of \(U_S\)
does not commute with \(E_\beta\).  Thus symplectic transport gives an exact
statement for the transported family but not automatically for every native
finite-energy regularization~\cite{Conrad2022}.  A quantitative nonsquare
analysis is outside the scope of this paper.

\section{Conclusion}
\label{sec:conclusion}

We construct four explicit full-oscillator measurements for a finite-energy
GKP CHSH test.  Position and momentum binnings supply the Pauli settings.  A
fixed binary coarse-graining of photon number modulo four supplies the first
tilted setting, and its displaced conjugate supplies the second.  Every
measurement is defined on the full Hilbert space, no outcome is postselected,
and all honest-model correlations are evaluated in the same normalized
number-filtered Bell state.

For the canonical displacement, the model predicts CHSH violation above
\(4.56\) dB and a nontrivial Kaniewski-bound fidelity estimate above
\(5.02\) dB.  Calibrating only the displacement amplitude lowers these
crossings to \(4.21\) dB and \(4.58\) dB.  At \(12\) dB, the score increases
from \(2.69486\) to \(2.78858\), and the corresponding model evaluation of
the fidelity bound increases from \(0.90758\) to \(0.97243\).  The optimized
amplitude approaches the ideal value as the squeezing increases.

The central finite-energy effect is parity leakage: the displacement does not
preserve the centered filtered code manifold and activates the odd
modulo-four sectors.  Their a priori binary assignments are therefore part of
the physical measurement, not an arbitrary detail that may be ignored after
logical compression.  Modular photon-number information supplies the tilted
axes unavailable to the homodyne-only Pauli palette, and the resulting scores
enter the regime where observed black-box correlations yield a quantitative,
dimension-independent Bell-pair extractability guarantee.  The corollary
certifies an extracted Bell pair from the observed score; it does not by
itself certify a GKP encoding or the detailed oscillator realization.

\appendix

\section{Fixing the phase of the quarter rotation}
\label{app:phase}

It is useful to verify Eq.~\eqref{eq:F-logical-H} directly because the global
phase of a measured unitary fixes its binary outcome labels.  With
\[
 {}_q\langle q|p\rangle
 =
 \frac{e^{iqp}}{\sqrt{2\pi}},
 \qquad
 F\ket{x}_q=\ket{x}_p,
\]
the position wavefunction of the transformed ideal codeword is
\begin{align}
 {}_q\langle q|F|\bar j\rangle
 &\propto
 \frac{1}{\sqrt{2\pi}}
 \sum_{s\in\mathbb Z}
 e^{iq(2s+j)\sqrt\pi}
 \nonumber\\
 &=
 \frac{e^{ij\sqrt\pi q}}{\sqrt{2\pi}}
 \sum_{s\in\mathbb Z}e^{i2s\sqrt\pi q}.
 \label{eq:poisson-start}
\end{align}
Poisson summation turns the last sum into a Dirac comb supported at
\(q=k\sqrt\pi\).  At those points,
\(e^{ij\sqrt\pi q}=(-1)^{jk}\).  Splitting \(k\) into its even and odd
parts therefore gives, with the common ideal normalization fixed,
\[
 F\ket{\bar j}
 =
 \frac{\ket{\bar0}+(-1)^j\ket{\bar1}}{\sqrt2}.
\]
This is \(+H\).  By contrast, the metaplectic operator
\(R(\pi/2)=e^{i\pi/4}F\) carries the extra phase \(e^{i\pi/4}\).
The latter is harmless for conjugation identities but not when the unitary
itself is read out as an observable.

\section{Extension of the CHSH bound by finite-rank approximation}
\label{app:infinite-dimension}

\begin{proof}[Proof of Corollary~\ref{thm:kaniewski}]
We work only with the abstract state and binary measurements in the CHSH
experiment.  No oscillator code subspace or decomposition into
two-dimensional sectors is used.  Let \(E^A_x\) and \(E^B_y\) denote one
effect of each binary POVM and set
\[
 A_x=2E^A_x-\Id_A,
 \qquad
 B_y=2E^B_y-\Id_B,
 \qquad x,y\in\{0,1\}.
\]
These operators are Hermitian contractions.  In this proof write
\[
 \mathsf W
 :=
 A_0\otimes(B_0+B_1)+A_1\otimes(B_0-B_1),
 \qquad
 S=\Tr(\rho\mathsf W).
\]
Choose increasing finite-rank
projections \(P_n\to\Id_A\) and \(Q_n\to\Id_B\) strongly, and write
\[
 R_n=P_n\otimes Q_n,
 \qquad
 p_n=\Tr(R_n\rho),
 \qquad
 \rho_n=\frac{R_n\rho R_n}{p_n}.
\]
For all sufficiently large \(n\), \(p_n>0\).  Since \(\rho\) is trace class
and \(R_n\to\Id_A\otimes\Id_B\) strongly,
\begin{equation}
 p_n\longrightarrow1,
 \qquad
 \norm{\rho_n-\rho}_1
 \le 2\sqrt{1-p_n}+(1-p_n)\longrightarrow0,
 \qquad
 \delta_n:=\frac12\norm{\rho_n-\rho}_1\longrightarrow0.
 \label{eq:finite-rank-state-convergence}
\end{equation}
Here the displayed estimate combines the gentle-measurement bound with the
normalization term
\(\norm{R_n\rho R_n-\rho_n}_1=1-p_n\).

On \(P_n\cH_A\) and \(Q_n\cH_B\), respectively, define the compressed
effects
\[
 E^{A,(n)}_x=P_nE^A_xP_n,
 \qquad
 E^{B,(n)}_y=Q_nE^B_yQ_n.
\]
Together with their complements relative to \(P_n\) and \(Q_n\), these are
valid binary POVMs on the finite-dimensional compressed spaces.  Their
associated observables are
\[
 A_x^{(n)}=P_nA_xP_n,
 \qquad
 B_y^{(n)}=Q_nB_yQ_n.
\]
If \(\mathsf W_n\) is the resulting CHSH operator, then, under its natural
embedding into the full tensor-product space,
\[
 \mathsf W_n=R_n\mathsf W R_n
 \quad\hbox{on }R_n(\cH_A\otimes\cH_B).
\]
Consequently,
\begin{equation}
 S_n
 :=
 \Tr(\rho_n\mathsf W_n)
 =\Tr(\rho_n\mathsf W),
 \qquad
 |S_n-S|
 \le\norm{\mathsf W}_\infty\norm{\rho_n-\rho}_1
 \longrightarrow0.
 \label{eq:truncated-score-convergence}
\end{equation}
The equality uses \(\rho_n=R_n\rho_nR_n\); moreover,
\(\norm{\mathsf W}_\infty\le2\sqrt2\) by the Tsirelson bound for Hermitian
contractions.

If \(S=2\), constant local channels already give overlap
\(1/2=\fK(2)\).  Suppose therefore that \(S>2\).  Then
\(S_n\in[2,2\sqrt2]\) for all sufficiently large \(n\).  The upper bound is
the usual Tsirelson bound for Hermitian contractions.  Theorem
\ref{thm:kaniewski-finite} supplies local channels
\[
 \Lambda_{A,n}:\mathcal T(P_n\cH_A)\longrightarrow
                 \mathcal T(\mathbb C^2),
 \qquad
 \Lambda_{B,n}:\mathcal T(Q_n\cH_B)\longrightarrow
                 \mathcal T(\mathbb C^2)
\]
such that
\begin{equation}
 \bra{\psi_{\rm B}}
 (\Lambda_{A,n}\otimes\Lambda_{B,n})(\rho_n)
 \ket{\psi_{\rm B}}
 \ge \fK(S_n).
 \label{eq:finite-rank-kaniewski}
\end{equation}
For clarity, if the finite-dimensional result is expressed for reflections,
the compressed POVMs are first dilated locally.  For any effect
\(0\le E\le\Id\), on \(\cH\oplus\cH\) set
\[
 R_E=
 \begin{pmatrix}
  2E-\Id & 2\sqrt{E(\Id-E)}\\
  2\sqrt{E(\Id-E)} & \Id-2E
 \end{pmatrix},
 \qquad
 V\xi=\xi\oplus0.
\]
Functional calculus gives \(R_E=R_E^\dagger\), \(R_E^2=\Id\), and
\(V^\dagger R_EV=2E-\Id\).  The same isometry \(V\) is used for both
settings at a party, so every CHSH correlator is preserved.  Composing the
extraction channel on the dilation with \(V\) gives the channels in
Eq.~\eqref{eq:finite-rank-kaniewski}.

Choose fixed qubit states \(\tau_A,\tau_B\).  Extend the truncated channels
to the full local trace classes by
\begin{align}
 \widetilde\Lambda_{A,n}(T)
 &=
 \Lambda_{A,n}(P_nTP_n)
 +\Tr[(\Id_A-P_n)T]\,\tau_A,
 \nonumber\\
 \widetilde\Lambda_{B,n}(T)
 &=
 \Lambda_{B,n}(Q_nTQ_n)
 +\Tr[(\Id_B-Q_n)T]\,\tau_B.
 \label{eq:extended-truncated-channels}
\end{align}
These maps are normal and CPTP.  For example, the Heisenberg adjoint of the
first is
\[
 \widetilde\Lambda_{A,n}^{\dagger}(C)
 =
 P_n\Lambda_{A,n}^{\dagger}(C)P_n
 +\Tr(\tau_A C)(\Id_A-P_n),
\]
which is unital and completely positive; normality is immediate because the
domain of this Heisenberg adjoint is finite-dimensional.
The spatial tensor product
\(\widetilde\Lambda_{A,n}\otimes\widetilde\Lambda_{B,n}\) is therefore itself
a normal CPTP map on the bipartite trace class.

Let \(\Pi_{\rm B}=\ketbra{\psi_{\rm B}}{\psi_{\rm B}}\).  Trace-distance
contractivity and \(0\le\Pi_{\rm B}\le\Id\) give the following.  Because
\(\rho_n\) is supported on
\(P_n\cH_A\otimes Q_n\cH_B\), the extended product channel agrees on
\(\rho_n\) with \(\Lambda_{A,n}\otimes\Lambda_{B,n}\):
\begin{align}
 &\Tr\!\left[
 \Pi_{\rm B}
 (\widetilde\Lambda_{A,n}\otimes
  \widetilde\Lambda_{B,n})(\rho)
 \right]
 \nonumber\\
 &\qquad\ge
 \Tr\!\left[
 \Pi_{\rm B}
 (\widetilde\Lambda_{A,n}\otimes
  \widetilde\Lambda_{B,n})(\rho_n)
 \right]-\delta_n
 \nonumber\\
 &\qquad\ge \fK(S_n)-\delta_n.
 \label{eq:full-state-approximate-extraction}
\end{align}
For every such \(n\), \(\Xi\) dominates the left-hand side of
Eq.~\eqref{eq:full-state-approximate-extraction}, and hence it also dominates
its limit superior.  Using Eqs.~\eqref{eq:finite-rank-state-convergence}
and~\eqref{eq:truncated-score-convergence} and continuity of \(\fK\) gives
\[
 \Xi(\rho\to\psi_{\rm B})\ge\fK(S).
\]

The channel pair used in this estimate may depend on \(n\).  No convergence
of the channels and no maximizing channel are required: by definition,
\(\Xi\) is the supremum over all local channel pairs, so it is at least the
value achieved by each pair
\((\widetilde\Lambda_{A,n},\widetilde\Lambda_{B,n})\).  Taking the limit of
the resulting scalar lower bounds therefore proves
Eq.~\eqref{eq:kaniewski-separable-bound}.  Equivalently, choosing a sufficiently
large \(n\) yields, for every \(\eps>0\), a channel pair with overlap at least
\(\fK(S)-\eps\).

The projections and normalized states used above occur only inside this
approximation proof.  They neither modify nor postselect the physical Bell
experiment, and the resulting extraction channels output abstract qubit
registers rather than selecting a physical two-dimensional oscillator sector.
\end{proof}

\section{Numerical details}
\label{app:numerics}

\subsection{Truncated filtered columns}

For \(n=0,\ldots,N_{\rm F}-1\), the code evaluates
Eq.~\eqref{eq:fock-coefficients} using the stable recurrence
\begin{align}
 \phi_0(x)&=\pi^{-1/4}e^{-x^2/2},\\
 \phi_1(x)&=\sqrt2\,x\phi_0(x),\\
 \phi_{n+1}(x)
 &=
 \sqrt{\frac{2}{n+1}}\,x\phi_n(x)
 -
 \sqrt{\frac{n}{n+1}}\,\phi_{n-1}(x).
 \label{eq:hermite-recurrence}
\end{align}
The same truncated columns determine \(S_\beta\), the state normalization,
the mean photon number, and all Fock-diagonal observables.  The canonical
isometry is formed by diagonalizing the positive \(2\times2\) Gram matrix,
not by normalizing the two columns separately.

\subsection{Quadrature and displacement calculations}

The position wavefunctions of the two canonical columns are reconstructed
from Eq.~\eqref{eq:hermite-recurrence}.  The integration domain is split at
every discontinuity of \(b(q)\), and each open bin interval is integrated by
Gauss--Legendre quadrature.  This avoids assigning a finite quadrature weight
to a bin boundary.  The momentum matrix follows from
Eq.~\eqref{eq:P-from-Q}.
The quadrature is first evaluated in the canonical columns.  The covariant
matrix used in the physical filtered-state correlation is then reconstructed
with both Gram factors:
\begin{equation}
 \Gamma_\beta(Q)
 =
 S_\beta^{1/2}
 \bigl(V_\beta^\dagger QV_\beta\bigr)
 S_\beta^{1/2}.
 \label{eq:restore-Gram-Q}
\end{equation}
Thus the numerical integration computes the quantity in
Eq.~\eqref{eq:Gamma}, not a correlation in the canonically embedded state.

In the Fock basis,
\[
 \hat p=\frac{\hat a-\hat a^\dagger}{i\sqrt2}
\]
is tridiagonal.  We apply
\(D(d)=e^{-id\hat p}\) to the two filtered columns through the
action of its matrix exponential and then insert the diagonal residue-class
operator \(M_4\).  For each parameter set we check:
\begin{align}
 &\norm{V_\beta^\dagger V_\beta-\Id_2},\\
 &\norm{FV_\beta-V_\beta H},\\
 &\norm{V_\beta^\dagger M_4V_\beta-H},\\
 &\left|\sum_{r=0}^3p_r-1\right|.
\end{align}
In the production runs these checks are at or below the observed numerical
truncation sensitivity.  The independent representation checks were carried
out at the four threshold points, with the calibrated displacement used at
the calibrated crossings, and at \(14\) dB.  Across these checks, evaluating
the \(Q\) matrix from the exact Mehler-kernel Gaussian comb agrees with the
Fock reconstruction at spectral-norm residuals below \(10^{-13}\).
Analytic Laguerre-polynomial matrix elements of the displacement reproduce
the exponentiated-Fock displaced-setting matrix at residuals below
\(10^{-14}\).  The source code, frozen data, and individual residuals are
included in the ancillary reproducibility package.

\section*{Data Availability Statement}

The numerical source code and machine-readable data underlying the tables and
Fig.~\ref{fig:main-results} are included in the ancillary reproducibility
archive supplied with this manuscript.  No experimental data were generated.

\section*{Acknowledgements} 
F.S.'s work is supported by the European Commission through a Marie
Sk{\l}odowska-Curie Global Fellowship.  He gratefully acknowledges the
hospitality of the Institute for Quantum Computing (IQC), University of
Waterloo, and Freie Universit\"at Berlin.  Part of this work was carried out
while F.S. was supported as a Walter Benjamin Fellow by the Deutsche
Forschungsgemeinschaft (DFG) at the Technical University of Munich.
M.H. was supported in part by the General R\&D Projects of 1+1+1
CUHK--CUHK(SZ)--GDST Joint Collaboration Fund (Grant No.~GRDP2025-022), the
Guangdong Provincial Quantum Science Strategic Initiative (Grant
No.~GDZX2505003), and the Shenzhen International Quantum Academy (Grant
No.~SIQA2025KFKT07).
During the preparation of this manuscript, the authors used Microsoft Copilot with the GPT-5.5 and 5.6 Thinking model to assist with language editing, organization and presentation of the manuscript, and the exploration, development, and checking of certain mathematical derivations and arguments. 
All AI-assisted material was critically reviewed, verified, and revised by the authors, who take full responsibility for the accuracy and integrity of the manuscript.

\sloppy

\fussy

\end{document}